\newcommand{\lhs}{\mbox{\texttt{lhs}}}
\newcommand{\rhs}{\mbox{\texttt{rhs}}}
\newcommand{\ord}{\precsim}
\newcommand{\cF}{\mathcal{F}}
\newcommand{\cV}{\mathcal{V}}
\newcommand{\cR}{\mathcal{R}}
\newcommand{\cE}{\mathcal{E}}
\newcommand{\qOmega}{\mathbb{\Omega}}
\newcommand{\qTheta}{\mathbb{\Theta}}
\newcommand{\qPhi}{\mathbb{\Phi}}
\newcommand{\Id}{\mathit{Id}}
\newcommand{\dom}{\mathit{dom}}
\newcommand{\ran}{\mathit{ran}}
\newcommand{\var}{\cV}
\newcommand{\pos}{\mathit{Pos}}
\renewcommand{\pos}{\mathit{O}}
\newcommand{\vpos}{\pos_{\cal V}}
\newcommand{\nvpos}{\pos_{\cal F}}
\newcommand{\bpos}{\pos_{\cal B}}
\newcommand{\mgu}{\ensuremath{\mathrm{mgu}}}
\newcommand{\sset}{\mathit{Sub}}
\newcommand{\narrow}{{\textup{\textsc{BQNarrow}}}}
\newcommand{\true}{{\textup{\textsc{true}}}}
\newcommand{\rulefont}[1]{\textsf{\textup{#1}}}     
\newcommand{\qetrulefont}[1]{\textsf{\textup{#1}}}  
\newcommand{\bfone}{\mathbb{1}}
\newcommand{\grdeg}{\partial}
\newcommand\newto[1][1.2em]{
	\mathrel{\kern-1pt\tikz[baseline=-0.5ex, 
	shorten <=1pt, shorten >=1pt] 
    \draw[-latex] (0,0) -- (#1,0);\kern-1pt}}
\newcommand{\newgets}[1][1.2em]{
	\mathrel{\kern-1pt\tikz[baseline=-0.5ex, 
	shorten <=1pt, shorten >=1pt] 
    \draw[-latex] (#1,0) -- (0,0);\kern-1pt}}
\newcommand\newlra[1][1.4em]{
	\mathrel{\tikz[baseline=-0.5ex, 
	shorten <=1pt, shorten >=1pt] 
    \draw[latex-latex] (0,0) -- (#1,0);}}
\newcommand\newda[1][1.0em]{
	\mathrel{\kern2pt\tikz[baseline=.5ex, 
	shorten <=1pt, shorten >=0pt] 
	\draw[-latex] (0,#1) -- (0,0);\kern2pt}}
\let\oldto\to
\renewcommand{\to}{\newto}
\renewcommand{\gets}{\newgets}
\renewcommand{\leftrightarrow}{\newlra}
\renewcommand{\downarrow}{\newda}
\newtheorem{notation}{Notation}
\newcommand\notsotiny{\@setfontsize\notsotiny\@vipt\@viipt}
\begin{document}

\hyphenation{uni-fi-ca-tion}
\title{Graded Quantitative Narrowing}
%
%
\author{Mauricio Ayala-Rincón\inst{1}\orcidlink{0000-0003-0089-3905} \and Thaynara Arielly de Lima\inst{2}\orcidlink{0000-0002-0852-9086} \and
Georg Ehling\inst{3}\orcidlink{0009-0003-5931-9673} \and
Temur Kutsia\inst{3}\orcidlink{0000-0003-4084-7380}}
\authorrunning{M. Ayala-Rincón et al.}
%
\institute{Universidade de Brasília, Brazil\\
\email{ayala@unb.br}
\and
Universidade Federal de Goiás, Brazil\\
\email{thaynaradelima@ufg.br}\\
\and
Johannes Kepler Universit\"at Linz, Austria\\
\email{\{gehling,kutsia\}@risc.jku.at}}
\maketitle              
\begin{abstract}
The recently introduced framework of Graded Quantitative Rewriting is an innovative extension of traditional rewriting systems, in which rules are annotated with degrees drawn from a quantale. This framework provides a robust foundation for equational reasoning that incorporates metric aspects, such as the proximity between terms and the complexity of rewriting-based computations. Quantitative narrowing, introduced in this paper, generalizes quantitative rewriting by replacing matching with unification in reduction steps, enabling the reduction of terms even when they contain variables, through simultaneous instantiation and rewriting. In the standard (non-quantitative) setting, narrowing has been successfully applied in various domains, including functional logic programming, theorem proving, and equational unification. Here, we focus on quantitative narrowing to solve unification problems in quantitative equational theories over Lawverean quantales. We establish its soundness and discuss conditions under which completeness can be ensured. This approach allows us to solve quantitative equations in richer theories than those addressed by previous methods.
\keywords{Quantitative equational reasoning \and Lawverean quantales \and Graded systems \and Unification \and Narrowing}
\end{abstract}

\section{Introduction}

While the notion of equality is fundamental to much of mathematics, it has also been viewed as limited in its ability to capture real-world phenomena, where entities are rarely ``exactly'' identical. As a result, various approaches have been developed to extend the equality predicate into a relation that accounts for proximity or similarity rather than strict identity. For these extensions, fundamental symbolic computation techniques such as, e.g., fuzzy equation solving \cite{DBLP:journals/fss/Ait-KaciP20,DBLP:journals/tcs/Sessa02,DBLP:conf/fscd/DunduaKMP20,Julian-Iranzo.2015,Pau.2022,Pau.2021,DBLP:journals/fuin/FormatoGS00} or metric-based equational reasoning \cite{Mardare.2016,Bacci.2020,DBLP:conf/lics/MardarePP17} have been investigated. 
A recent proposal by Gavazzo and Di Florio \cite{Gavazzo.2022} for quantitative equational theories is one of the more general approaches, containing some other frameworks as special cases. In this framework, equations are endowed with an element of a Lawverean quantale \cite{Lawvere1973}, measuring, in one way or another, the degree to which they hold. A unification algorithm for solving equations in a class of equational theories in this framework was introduced in~\cite{DBLP:conf/ijcar/EhlingK24}.

In the standard (non-quantitative) setting, narrowing was introduced as a generalization of rewriting that reduces terms containing variables by replacing matching with unification, enabling variable instantiation during reduction. Among its many applications, equational unification holds one of the central roles. Seminal works by Fay and Hullot, who introduced, respectively, ``ordinary'' and ``basic'' narrowing, proved the completeness of narrowing for solving equations in equational theories presented by convergent (confluent and terminating) term rewriting systems (TRSs)~\cite{Fay1979, Hullot1980}. Hullot's initial result on the termination of basic narrowing states that it terminates for convergent term rewriting systems, provided that every basic narrowing derivation starting from the right-hand side of any rewriting rule terminates. This result was later refined by Alpuente et al., who relaxed the requirement of convergence~\cite{DBLP:journals/tcs/AlpuenteEI09, DBLP:journals/igpl/AlpuenteEI11}. The completeness of ordinary narrowing holds even without the termination assumption, provided that the set of solutions is restricted to normalizable substitutions \cite{DBLP:conf/alp/MiddeldorpH92,DBLP:journals/aaecc/MiddeldorpH94}. However, this is not the case for basic narrowing, where the completeness fails even for normalizable substitutions if termination is not assumed. Narrowing has been extended to conditional rewriting theories and applied as a key mechanism in combining functional and logic programming \cite{DBLP:conf/aaai/DershowitzS88, DBLP:journals/entcs/EscobarMT07, DBLP:journals/cacm/AntoyH10}, and further implemented in deductive frameworks based on rewriting and logic (e.g., \cite{Hanus16Curry,DBLP:conf/birthday/ClavelDEELMT15}). It has been applied in protocol security verification \cite{DBLP:journals/jlap/LopezRuedaES23}, verification of termination of rewriting modulo equational axioms, and protocol security \cite{DBLP:journals/jlp/EscobarSM12, DBLP:journals/peerj-cs/TranDEO23}. Other applications are related to model checking \cite{DBLP:conf/rta/BaeEM13}, formalization of concurrent programming, and, in general, reachability analysis in deductive frameworks \cite{DBLP:journals/scp/CholewaEM15}.   

Graded quantitative rewriting offers a natural and fine-grained framework for formalizing the complexity of rewriting-based computations. Related approaches, such as weighted rewriting \cite{DBLP:conf/fscd/AhrensKGK25, AvanziniYFroCoS25}, typically focus on measuring reduction steps to analyze the complexity of rewriting derivations. Graded quantitative rewriting, in addition, provides a precise account of the cost of redex selection, a fundamental aspect of rewriting-based computation. This additional measure reflects the sensitivity to the position of the context in which quantitative redexes are identified and reductions are applied. A similar consideration arises in formal frameworks such as the lambda calculus, explicit substitutions, and nominal rewriting, where the cost of locating redexes to contract plays a crucial role in accurately modeling computational complexity.

Advances in quantitative computational techniques play a crucial role in automating reasoning about structures concerning various behavioral metrics. Quantitative narrowing presents a powerful mechanism that can be integrated into proof assistants to enhance automation in solving approximate equational problems, thereby supporting rigorous reasoning about the inherent imprecision in computational models of real-world systems. This paper takes a first step in this direction by introducing a graded quantitative narrowing calculus, where quantitative information is modeled using quantales. This is a general framework that can accommodate various well-known quantitative theories as special cases.

\paragraph{Contributions.} This paper introduces quantitative narrowing, defining ordinary and basic variants, and presents a rule-based basic narrowing procedure for solving equational unification problems over Lawverean quantales. This work extends the class of theories considered in quantitative unification as studied in~\cite{DBLP:conf/ijcar/EhlingK24}. We prove the soundness of the procedure and specify conditions under which completeness can be guaranteed, highlighting key differences from the well-known properties of ordinary and basic narrowing in the crisp first-order setting. By addressing unification problems over quantales using narrowing techniques, this paper takes a step toward establishing a framework for quantitative unification modulo equational theories.

\paragraph{Organization.} Section \ref{sect:prelim} introduces the necessary background on quantales, while Section \ref{sect:gradedTRS} defines graded quantitative equational and rewriting systems and introduces the quantitative unification problem. Section \ref{sect:narrowing} presents the notion of graded narrowing and introduces a set of basic narrowing inference rules, which are shown to be sound. Section \ref{sect:completeness} then discusses conditions for completeness. Finally, Section \ref{sect:conclusion} concludes the paper and briefly outlines directions for future work.

\section{Preliminaries}
\label{sect:prelim}

The basic notions and terminology on quantales are included in the following subsections.  

\subsection{Quantales} 
\label{subsect:quantales}

For the notions in this part, we follow  \cite{Gavazzo.2022,DBLP:journals/pacmpl/GavazzoF23}.
\begin{definition}[Quantale]
     A (unital) \emph{quantale} $\qOmega = (\Omega, \ord, \otimes, \kappa)$ consists of a monoid $(\Omega, \kappa, \otimes)$ and a complete lattice $(\Omega, \ord)$ (with join $\vee$ and meet $\wedge$) satisfying the following distributivity laws: $\delta \otimes \left( \bigvee_{i\in I} \varepsilon_i \right) =  \bigvee_{i\in I} (\delta \otimes \varepsilon_i)$ and $\left ( \bigvee_{i\in I} \varepsilon_i \right ) \otimes \delta =  \bigvee_{i\in I} (\varepsilon_i  \otimes \delta)$.
\end{definition}

\begin{notation}The element $\kappa$ is called the \emph{unit} of the quantale, and $\otimes$ its \emph{tensor} (or multiplication). Besides $\kappa$, we use Greek letters $\alpha, \beta, \gamma, \delta, \varepsilon, \zeta, \eta$, and $\iota$ to denote elements of $\Omega$.
The \emph{top} and \emph{bottom} elements of a quantale are denoted by $\top$ and $\bot$,  respectively. 
\end{notation}

\begin{definition}[Lawverean quantales]  
A quantale $\qOmega = (\Omega, \ord, \otimes, \kappa)$ is called \emph{integral} if $\kappa = \top$ and \emph{non-trivial} if $\kappa \neq \bot$. It is commutative if the monoid $(\Omega, \otimes, \kappa)$ is commutative. $\qOmega$ is \emph{cointegral} if $\varepsilon \otimes \delta = \bot$ implies either $\varepsilon = \bot $ or $\delta = \bot $. 
\emph{Lawverean} quantales are commutative, integral, cointegral, and nontrivial.
\end{definition}

\begin{example} Table \ref{tab:freq} brings some concrete quantales. Notice that the fuzzy quantale $\mathbb{I}$ is Lawverean for the G\"odel and product T-norms, but not for the {\L}uka\-siewicz T-norm. 
\begin{table}[!h]
\centering
  \caption{Correspondence between quantales $\qOmega$ (generic), $\mathbb{2}$ (Boolean), $\mathbb{L}$ (Lawvere), $\mathbb{L}^{\max}$ (strong Lawvere), and $\mathbb{I}$ (fuzzy). Examples from \cite{Gavazzo.2022}.
  }
  \label{tab:freq}
  \begin{tabular}{p{1.5cm}w{c}{1.4cm}w{c}{1.4cm}w{c}{1.4cm}w{c}{1.4cm}c}
    \toprule
             &  $\qOmega$ &  $\mathbb{2}$ & $\mathbb{L}$ & $\mathbb{L}^{\max}$   &  $\mathbb{I}$  \\
    \midrule
   Carrier   & $\Omega$  &  $\{0, 1\}$  & $[0,\infty]$   & $[0,\infty]$  & $[0,1]$        \\
   Order     & $\ord$    &  $\leqslant$            & $\geqslant$         & $\geqslant$        & $\leqslant $        \\
   Join      & $\vee$    &  $\exists$         & $\inf$         & $\inf$        & $\sup$         \\
   Meet      & $\wedge$  &  $\forall$         & $\sup$         & $\sup$        & $\inf$         \\
   Tensor    & $\otimes$ &  $\wedge$          & $+$            & $\max$        & left-continuous T-norm   \\
   Unit      & $\kappa$  &  $1$               & $0$            & $0$           & $1$             \\
  \bottomrule 
\end{tabular}\vspace{-3mm}
\end{table}
\end{example}

\begin{remark} In any quantale $\qOmega = (\Omega, \ord, \otimes, \kappa)$:
\begin{enumerate}[label=(\roman*)]
    \item $\otimes$ is monotonous: $\alpha\ord\beta \Rightarrow \alpha\otimes\gamma\ord\beta\otimes\gamma$ and $\gamma \otimes \alpha\ord \gamma \otimes \beta$.
    
    Indeed, $\alpha\otimes\gamma \ord {}  (\alpha\otimes\gamma)\vee(\beta\otimes\gamma)=(\alpha\vee\beta)\otimes\gamma = \beta\otimes\gamma$. 
    \item If $\qOmega$ is integral then $\alpha\otimes\beta\ord\alpha\wedge \beta$.
    
    Using monotonicity and integrality, we obtain $\alpha\otimes\beta\ord\alpha\otimes\top=\alpha$ and similarly, $\alpha \otimes \beta \ord \beta$; thus, $\alpha \otimes \beta \ord \alpha \wedge \beta$.
\end{enumerate}

\end{remark}

In the rest of the paper, we consider Lawverean quantales.

We now define quantitative ternary relations. 
This notion differs from the ``$\qOmega$-ternary relations'' introduced in \cite{Gavazzo.2022}, which are also required to be antitone in the quantale argument, a condition that we do not impose here. 
\begin{definition}[Quantitative ternary relations] \label{def:qtnr}
    Let $\qOmega$ be a Lawverean quantale, and let $A,B$ be sets. 
    $R \subseteq A \times \Omega \times B$ is said to be a \emph{quantitative ternary relation}. 
    The composition of two quantitative ternary relations $R \subseteq A \times \Omega \times B$ and $S \subseteq B \times \Omega \times C$ is the least quantitative ternary relation $(R;S) \subseteq A \times \Omega \times C$ satisfying $(R;S)(a,\varepsilon \otimes \delta,c)$ whenever $R(a,\varepsilon,b)$ and $S(b,\delta,c)$ hold.
   
    For any set $A$, the diagonal $\Delta_{\mathbb{\Omega},A} \subseteq A \times \Omega \times A$ is the defined by $\Delta_{\mathbb{\Omega},A}\coloneqq \{(a,\kappa,a)\mid a \in A\}$. 
    For a quantitative ternary relation $R \subseteq A \times \Omega \times A$ and $n \in \mathbb{N}$, we define the $n$-th power of $R$ inductively by 
    \[
    	R^0 \coloneqq \Delta_{\mathbb{\Omega},A}
    	\qquad \text{and} \qquad 
    	R^{n+1} \coloneqq (R;R^n). 
    \] 
    The transitive closure $R^+$ and the reflexive transitive closure $R^*$ of $R$ are respectively defined by \[
    	R^+ \coloneqq \bigcup_{n \geqslant 1} R^n \qquad \text{and} \qquad R^* \coloneqq R^0 \cup R^+
    .\] 
\end{definition}

%

\begin{definition}[Quantale homomorphism]
    \label{def:hom}
    Given quantales $\qOmega=(\Omega, \ord, \otimes, \kappa)$ and $\qTheta = (\Theta, \sqsubseteq, \boxtimes, \eta)$, a \emph{quantale homomorphism} is a monotone map $h : \Omega \longrightarrow \Theta $ such that
    $h(\kappa) = \eta $, 
    $h(\varepsilon) \boxtimes h(\delta) = h(\varepsilon \otimes \delta)$, and $h\left(\bigvee_i \varepsilon_i \right) = \bigsqcup_i h(\varepsilon_i)$ (where $\bigvee$ and $\bigsqcup$ are joins of $\qOmega$ and $\qTheta$, respectively).
    For notational convenience, we also say that $h$ is a quantale homomorphism $\qOmega \rightarrow \qTheta$. 
\end{definition}

In the remainder of the paper, we work with quantale endomorphisms, i.e. homomorphisms $\qOmega \rightarrow \qOmega$. We denote them by $\phi,\psi,\ldots$. In the literature, they are called \emph{change of base endofunctors}, or CBEs. Notice that the set of CBEs $\qOmega \rightarrow \qOmega$ is closed under composition.

\begin{example}
The identity function $\bfone\colon\Omega \rightarrow \Omega$ on an arbitrary quantale domain $\Omega$ and multiplication by a nonnegative real constant in the case of the Lawvere quantale are trivial examples of CBEs. 
\end{example}
\begin{remark}
For a given quantale $\qOmega=(\Omega, \ord, \otimes, \kappa)$, its order $\ord$ and tensor $\otimes$ can be extended to CBEs pointwise as follows. Consider $\phi$ and $\psi$ CBEs on $\qOmega$:
\begin{itemize}
    \item [i)] $\phi \ord \psi$ iff  $\phi (\alpha) \ord \psi (\alpha)$ for all $\alpha \in \Omega$.
    \item[ii)] $(\phi \otimes \psi)(\alpha)= \phi (\alpha) \otimes \psi (\alpha)$ for all $\alpha \in \Omega$.
\end{itemize}
\end{remark}
\begin{notation}
We denote by $\kappa^\star$ the constant CBE such that $\kappa^\star(\alpha)=\kappa $, for all $\alpha\in \Omega$.
\end{notation}

We assume $\qOmega$ to be fixed and consider a structure $\qPhi=(\Phi, \ord, \circ, \bfone, \otimes, \kappa^\star)$, where $\Phi$ is a set of CBEs containing the identity function $\bfone$ and constant function $\kappa^\star$, closed under composition $\circ$ and tensor $\otimes$.

\section{Graded Equalities and Rewriting Rules}
\label{sect:gradedTRS}
We assume familiarity with term rewriting, unification, and narrowing as presented in textbooks (e.g., \cite{DBLP:books/daglib/0092409,DBLP:books/daglib/0008995}) and papers as \cite{DBLP:journals/aaecc/MiddeldorpH94,Baader.2001}. In this section, we introduce quantitative unification problem and adapt quantitative TRSs from \cite{Gavazzo.2022} for addressing unification using graded narrowing later. 
\begin{definition} [Modal arity, $\qPhi$-graded signature, (ground and linear) terms, positions, and substitutions] 
    \begin{enumerate}
        \item A \emph{$\qPhi$-graded signature} $\cF$ is a set of function symbols, each equipped with a fixed nonnegative modal arity. The \emph{modal arity} of an $n$-ary function symbol $f$ is a tuple $(\phi_1, \ldots , \phi_n)$ with
        $\phi_i\in\Phi$. In this case, we say that $f$ has \emph{sensitivity} (or \emph{modal grade}) $\phi_i$ on its $i$-th argument and use the notation 
        $f:(\phi_1, \ldots , \phi_n)$ to represent it. 
        \item The set of \emph{terms} over a $\qPhi$-graded signature $\cF$ and a set of variables $\cV$ is defined in the standard way and denoted by $T(\cF,\cV)$. 
        The notion of a \emph{position} in a term is also standard. 
        \item A term is \emph{ground} if it contains no variables, and \emph{linear} if it does not contain repeated occurrences of the same variable.
        \item A \emph{substitution} is a map $\sigma\colon \cV\longrightarrow T(\cF,\cV)$ which maps all but finitely many variables to themselves. 
 We use the set notation for substitutions, writing $\sigma$ explicitly as a finite set $\{x\mapsto \sigma(x) \mid x\neq \sigma(x)\} $. 
The \emph{domain} and \emph{range} of $\sigma$ are defined as $\dom(\sigma)\coloneqq \{x \mid x\neq \sigma(x)\}$ and $\ran(\sigma)=\{\sigma(x) \mid x\in\dom(\sigma) \}$, respectively.
A substitution $\sigma$ extends naturally to an endomorphism on $T(\cF,\cV)$. 
The image of a term $t$ under this endomorphism is denoted by $t\sigma$. 
        \end{enumerate}
\end{definition}

\begin{notation}
The set of positions of a term $t$ is denoted $\pos(t)$. 
For two positions $p$ and $q$, we write $p.q$ for their concatenation, and $p \sqsubseteq q$ to denote that $p$ is a prefix of $q$. 
The subterm of a term $t$ at position $p$ is represented by $t|_p$, while $t[s]_p$ is the term obtained from $t$ by replacing $t|_p$ by $s$.
Moreover, we denote the set of variable positions of $t$ by $\vpos(t)$, and the set of non-variable positions of $t$ by $\nvpos(t)$.

Given a term $t\in T(\cF,\cV)$, we denote by $\cV(t)$ the set of variables appearing in $t$.  For a set of terms $S$, $\cV(S)=\bigcup_{t\in S}\cV(t)$. 

The set of ground terms over $\cF$ is denoted by $T(\cF)$.  

Greek letters $\sigma,\rho,\varphi,\vartheta,\tau$ are used for substitutions, while $\Id$ denotes the identity substitution. The set of substitutions is denoted by $\sset$.
For substitutions $\sigma$ and $\tau$, we write $\sigma \leqq \tau$ if $\sigma\rho = \tau$ holds for some substitution $\rho$. 
Similarly, we write $t \leqq s$ for two terms $t$ and $s$ if $t \rho = s$ for some substitution $\rho$. 
\end{notation}

\begin{definition} [Grade of a position  and a variable in a term]
\begin{enumerate}
        \item Given a term $t$ and a position $p$ in it, the \emph{grade $\grdeg_p(t)$ of $p$ in $t$} is a CBE defined as follows:
        \begin{itemize}
            \item $\grdeg_\lambda(t) \coloneqq \bfone$, where $\lambda$ denotes the top position (empty sequence).
            \item $\grdeg_{i.p}(f(t_1,\ldots,t_n)) \coloneqq\phi_i\circ \grdeg_p(t_i)$, where $f:(\phi_1,\ldots,\phi_n)\in \cF$.
        \end{itemize}
        \item The grade of a variable $x$ in a term $t$, denoted by $\grdeg_x(t)$, is defined as:
        \[ \grdeg_x(t)\coloneqq \left\{
        \begin{array}{cc}
           \hspace{-6mm}\kappa^\star,  & \mbox{ if } x \not\in\cV(t)\\ 
           \displaystyle \bigotimes_{\{p \;\mid \;t|_p = x\}}  \grdeg_p(t),  & \mbox{ otherwise}
        \end{array}
        \right.
        \]
    \end{enumerate}
\end{definition}

\begin{definition}[Graded equational theory] 
Consider a quantitative ternary relation given by a set $E$ of triples $(t,\varepsilon,s)$,  where $t,s$ are terms constructed over a $\qPhi$-graded signature $\cF$ and $\varepsilon$ is an element of a Lawverean quantale $\qOmega$.
We denote such triples by $\varepsilon \Vdash t\approx_E s$ and refer to them as $\qOmega$-equalities. 
A \emph{graded quantitative equational theory} (or \emph{graded $(\qOmega,\qPhi)$-equational theory}) is a pair $\cE=(\cF,=_E)$, where $=_E$ is the quantitative ternary relation generated from $\approx_E$ by the rules in Fig.~\ref{fig:qet}\footnote{The original definition in \cite{Gavazzo.2022} includes an Archimedean rule not necessary for our purposes.}.  
We call $E$ a \emph{presentation} of $=_E$. 
\end{definition}
\vspace{-3mm}

\begin{figure}[ht]
    \centering
    \begin{framed}
        \vspace{-2mm}
        \[{\text{\notsotiny \sf (Ax)\ }\frac{\varepsilon \Vdash t\approx_E s}{\varepsilon \Vdash t=_E s}} \quad 
        {\text{\notsotiny \sf (Refl)\ }\frac{}{\kappa\! \Vdash t=_E t} } \quad
        {\text{\notsotiny \sf (Sym)\ }\frac{\varepsilon \Vdash t=_E s}{\varepsilon \Vdash s=_E t}} \quad
        \text{\notsotiny \sf (Trans)\ } \frac{\varepsilon \Vdash t=_E s \ \, \, \, \delta \Vdash s=_E r}{\varepsilon \otimes \delta \Vdash t=_E r} \vspace{2mm}
        \] 
        \[\text{\notsotiny \sf (Ampl)\ }\frac{\varepsilon_1 \Vdash t_1=_E s_1 \quad \cdots \quad \varepsilon_n \Vdash t_n=_E s_n \quad f:(\phi_1,\ldots,\phi_n)\in \cF}{\phi_1(\varepsilon_1) \otimes \cdots \otimes \phi_n(\varepsilon_n) \Vdash f(t_1,\ldots,t_n)=_E f(s_1,\ldots,s_n)}
          \vspace{2mm} \] 
        \[ \text{\notsotiny \sf (Subst)\ }\frac{\varepsilon \Vdash t=_E s}{\varepsilon \Vdash t\sigma =_E s\sigma}\qquad \qquad \text{\notsotiny \sf (Ord)\ }\frac{\varepsilon \Vdash t=_E s \quad \delta \ord \varepsilon}{\delta \Vdash t=_E s}
          \vspace{2mm}
        \]
        \[ \text{\notsotiny \sf (Join)\ }\frac{\varepsilon_1 \Vdash t=_E s \quad \cdots \quad \varepsilon_n \Vdash t=_E s}{\varepsilon_1 \vee \cdots \vee \varepsilon_n \Vdash t=_E s} \]
        \caption{Graded quantitative equational theory}
        \label{fig:qet}
    \end{framed}
\end{figure}

\begin{definition}[Balanced Equational Theory] A graded equational theory is called \emph{balanced} if whenever $\varepsilon \Vdash s \approx t$, we have $\grdeg_x(s)=\grdeg_x(t)$ for any  variable $x$. If all equations in $\approx_E$ are balanced, so are equations in $=_E$.
\end{definition}

The grade of a position in a given context determines how this context amplifies distances: 
\begin{lemma}[Graded Context Rule \qetrulefont{(Context)}] \label{lem:graded_contexts}
	In any graded equational theory $E$, the following inference rule is valid for all terms $u,v,t$, degrees $\varepsilon \in \Omega$ and positions $p \in \pos(t)$: 
	\[
		\qetrulefont{\scriptsize(Context)} \quad \frac{\varepsilon \Vdash u =_E v}{\partial_p(t)(\varepsilon) \Vdash t[u]_p =_E t[v]_p}
	.\] 
\end{lemma}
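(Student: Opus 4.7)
The natural approach is induction on the position $p \in \pos(t)$, following the inductive definition of $\partial_p(t)$.

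For the base case $p = \lambda$, we have $t[u]_\lambda = u$, $t[v]_\lambda = v$, and by definition $\partial_\lambda(t) = \bfone$, so $\partial_\lambda(t)(\varepsilon) = \varepsilon$. The conclusion is then just the hypothesis $\varepsilon \Vdash u =_E v$.

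For the inductive step, we have $p = i.q$ and $t = f(t_1,\ldots,t_n)$ with $f : (\phi_1,\ldots,\phi_n) \in \cF$, so $q \in \pos(t_i)$. Applying the induction hypothesis to $t_i$ and $q$ yields
\[
	\partial_q(t_i)(\varepsilon) \Vdash t_i[u]_q =_E t_i[v]_q.
\]
For every other argument $j \neq i$, I would invoke \qetrulefont{(Refl)} to obtain $\kappa \Vdash t_j =_E t_j$. Feeding these $n$ premises into \qetrulefont{(Ampl)} gives
\[
	\phi_1(\kappa) \otimes \cdots \otimes \phi_i(\partial_q(t_i)(\varepsilon)) \otimes \cdots \otimes \phi_n(\kappa) \Vdash t[u]_p =_E t[v]_p.
\]
Here I would use the fact that each CBE $\phi_j$ is a quantale homomorphism and hence preserves the unit, so $\phi_j(\kappa) = \kappa$, and then that $\kappa$ is the unit for $\otimes$ in the monoid $(\Omega,\otimes,\kappa)$. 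The tensor therefore collapses to $\phi_i(\partial_q(t_i)(\varepsilon))$, which by the defining clause $\partial_{i.q}(t) = \phi_i \circ \partial_q(t_i)$ is exactly $\partial_p(t)(\varepsilon)$, closing the induction.

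No step here is truly an obstacle; the only subtlety worth flagging is the silent use of $\phi_j(\kappa) = \kappa$ on the bystander arguments, which is what makes the \qetrulefont{(Ampl)} step yield precisely the grade $\partial_p(t)(\varepsilon)$ dictated by the definition of the grade of a position rather than a larger tensor involving the surrounding context.
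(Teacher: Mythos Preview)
Your proof is correct and matches the paper's own argument essentially step for step: induction on the length of $p$, with the base case trivial and the inductive step combining the induction hypothesis with \qetrulefont{(Refl)} on the remaining arguments and a single application of \qetrulefont{(Ampl)}. Your explicit remark that $\phi_j(\kappa) = \kappa$ collapses the tensor is exactly the point the paper leaves implicit when it jumps directly to $\phi_i(\grdeg_q(w_i)(\varepsilon)) \Vdash t[u]_p =_E t[v]_p$.
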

\begin{proof}
We proceeded by induction on the length of the position $p$.

If $p = \lambda$, the statement holds trivially as $\partial_{\lambda}(t) = \bfone, t[u]_p = u $ and $t[v]_p = v$. 

In the inductive step, consider $p \neq \lambda$. In this case, $t = f(w_1, \ldots, w_n)$. Thus, we can write $p = i.q$ for some position $q$ and $i \in \{1,\dots,n\}$. Also, $t[u]_p = f(w_1, \ldots, w_i[u]_q, \ldots, w_n)$ and $t[v]_p = f(w_1, \ldots, w_i[v]_q, \ldots, w_n)$.
Assume that the statement holds for all positions $|q| < |p|$ and $f: (\phi_1,\dots,\phi_n)$. By induction hypothesis, $\grdeg_q(w_i)(\varepsilon)\Vdash w_i[u]_q =_E w_i[v]_q$. Consequently, by \qetrulefont{(Refl)} and \qetrulefont{(Ampl)} rules, \[
		\phi_i(\grdeg_q(w_i)(\varepsilon)) \Vdash t[u]_p =_E t[v]_p
	\] 
This concludes the proof since $(\phi_i \circ \grdeg_q(w_i))(\varepsilon) = \partial_p(t)(\varepsilon)$, by definition.
\end{proof}

\begin{corollary}\label{cor:GradedReplacementInContext}
Let $u, v$, and $t$ be terms in any graded equational theory $E$. Suppose that $p\in \pos(t)$ and $\kappa \Vdash u =_E v$.  Then,  $\kappa\Vdash t[u]_p =_E t[v]_p$.    
\end{corollary}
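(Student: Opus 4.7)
The plan is to derive this corollary directly from Lemma~\ref{lem:graded_contexts} by observing that the grade $\partial_p(t)$ sends $\kappa$ to $\kappa$. Concretely, applying the Graded Context Rule to the hypothesis $\kappa \Vdash u =_E v$ immediately yields
\[
\partial_p(t)(\kappa) \Vdash t[u]_p =_E t[v]_p,
\]
so the whole task reduces to establishing $\partial_p(t)(\kappa) = \kappa$.

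To see this, I would argue that $\partial_p(t)$ is itself a CBE. Recall from the definition that $\partial_\lambda(t) = \bfone$ and $\partial_{i.p}(f(t_1,\ldots,t_n)) = \phi_i \circ \partial_p(t_i)$. Since $\bfone \in \Phi$ and the set of CBEs is closed under composition (as stated right after Definition~\ref{def:hom}), a straightforward induction on the length of $p$ shows that $\partial_p(t) \in \Phi$ for every term $t$ and every $p \in \pos(t)$. By Definition~\ref{def:hom}, every quantale homomorphism preserves the unit, and in particular every CBE satisfies $\phi(\kappa) = \kappa$. Hence $\partial_p(t)(\kappa) = \kappa$, and the inequality follows.

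Combining the two observations, we obtain $\kappa \Vdash t[u]_p =_E t[v]_p$ as required. There is no real obstacle here; the only subtle point is making explicit that grades of positions are CBEs (and so inherit unit preservation from the homomorphism condition), which is not spelled out in the grade definition but is immediate from the closure properties of $\Phi$. Alternatively, one could avoid this small observation by noting that the inference rule \qetrulefont{(Ord)} together with integrality of $\qOmega$ (so that $\partial_p(t)(\kappa) \ord \kappa = \top$) would suffice to weaken any degree to $\kappa$ had $\partial_p(t)(\kappa)$ not already equalled $\kappa$; but the direct unit-preservation argument is cleaner and does not appeal to the specific quantale order.
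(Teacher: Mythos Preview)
Your proposal is correct and follows essentially the same approach as the paper: apply Lemma~\ref{lem:graded_contexts} to obtain $\partial_p(t)(\kappa) \Vdash t[u]_p =_E t[v]_p$, and then use that $\partial_p(t)$ is a quantale homomorphism (hence preserves the unit) to conclude $\partial_p(t)(\kappa)=\kappa$. Your additional remark that $\partial_p(t)\in\Phi$ by an easy induction, and the alternative via \qetrulefont{(Ord)} and integrality, are sound but not needed beyond what the paper uses.
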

\begin{proof}
By Lemma \ref{lem:graded_contexts} one has that $\grdeg_p(t)(\kappa)\Vdash t[u]_p =_E t[v]_p$. The result is obtained since $\grdeg_p(t)$ is a quantale homomorphism. 
\end{proof}

The quantitative unification problem below was introduced in~\cite{DBLP:conf/ijcar/EhlingK24}:
\begin{mdframed}
		\begin{tabularx}{\textwidth}{p{1.5cm}X}
				{\rm \textbf{Given:}}  & A quantale $\mathbb{\Omega}$, $\varepsilon \in \Omega$ (called the threshold) with $\varepsilon \succ \bot$, a finite quantitative ternary relation $E \subseteq T(\cF,\cV) \times \Omega \times T(\cF,\cV)$, and two terms $t$ and $s$. \\
				{\rm \textbf{Find:}}  &  A substitution $\sigma$ such that $\varepsilon \Vdash t\sigma =_E s\sigma$.
		\end{tabularx}
\end{mdframed}
This problem is called the \emph{$(E,\varepsilon)$-unification problem} for $t$ and $s$, denoted $t =^?_{E,\varepsilon} s$ for short. 
If $\sigma$ satisfies $\varepsilon \Vdash t \sigma =_E s \sigma$, then $\sigma$ is called an \emph{$(E,\varepsilon)$-unifier} of $t$ and $s$, or an $E$-unifier of $t$ and $s$ of \emph{degree} $\varepsilon$.  

\begin{definition}[Graded quantitative term rewriting system]
    \label{def:graded:trs:no:weakening}
    An $(\qOmega, \qPhi)$-term rewriting system $((\qOmega, \qPhi)$-TRS, for short) is a pair $\cR = (\cF, R)$ consisting of a $\qPhi$-graded signature $\cF$ and a quantitative ternary relation $R \subseteq T(\cF,\cV) \times \Omega \times T(\cF,\cV)$, where for convenience triples $(l,\varepsilon,r) \in R$ are written as $\varepsilon\Vdash l \mapsto_R r$ and called rewrite rules. 
   Moreover, we assume that $l \not \in \var$ and $\var(r)\subseteq \var(l)$ hold for any rewrite rule $\varepsilon \Vdash l \mapsto_R r$.  
    The rewriting relation $\to_R$ generated by $R$ is defined as the least ternary relation $\to_R$ such that for any terms $s, l,r\in T(\cF,\cV)$, position $p$ in $s$, quantale element $\varepsilon\in \Omega$, and substitution $\sigma$ we have:
    \begin{align*}
        &\frac{\varepsilon \Vdash l\mapsto_R r}{\grdeg_p(s)(\varepsilon) \Vdash s[l\sigma]_p\to_R s[r\sigma]_p},
    \end{align*}
   where infix notation like $\zeta \Vdash t \to_R s$ is used to denote $(t,\zeta,s) \in \; \to_R$. 
\end{definition}

Note that this notion deviates from the one in \cite{Gavazzo.2022}, which includes rules to account for the \qetrulefont{(Ord)}, \qetrulefont{(Join)}, and Archimedean axiom.
To emphasize this deviation visually, we use different arrow symbols, such as $\to$ instead of $\oldto$. 
The rewrite relation defined here preserves a conceptually important property of non-quantitative TRSs, namely that with respect to a finite set of rewrite rules, there are only finitely many single-step rewrites from a given term. 
This is not the case if a structural rule is included to account for \qetrulefont{(Ord)}.

Graded term rewriting systems naturally inherit properties of ordinary TRSs that will be denoted in the standard manner (cf. \cite{Baader.2001}). 
Left- and right-hand sides of the rules are abbreviated as \lhs\ and \rhs, respectively.

\begin{definition} [Linear and ground graded rewriting systems]
\label{def:lin_grs}
An $(\qOmega,\qPhi)$-TRS $\cR$ and its rules are called \emph{left-linear} and \emph{right-linear} if their \lhs's and \rhs's are linear terms, respectively.  
If they are both left- and right-linear,  they are called linear. 
Analogously, $\cR$ is called \emph{left-ground} and \emph{right-ground} if the \lhs's and \rhs's are ground terms, respectively. 
\end{definition}

\begin{notation}
For more notational flexibility, we also allow to write $l \to_{R, \delta} r$ instead of $\delta \Vdash l \to_R r$.
\end{notation}

\begin{definition}[$(R,\varepsilon)$-convertible and $(R,\varepsilon)$-joinable terms]
\label{def:convertibility:joinability}
Given an $(\mathbb{\Omega}, \mathbb{\Phi})$-TRS $\cR=(\cF,\mapsto_R)$, we define quantitative ternary relations $\gets_R$ and $\leftrightarrow_R$ by setting $t \gets_{R, \varepsilon} s$ iff $s \to_{R, \varepsilon} t$ and $\leftrightarrow_R \ \coloneqq \ \to_R \cup \gets_R$. 
If two terms $t,s$ for some $\varepsilon \in \Omega$ satisfy $t \leftrightarrow^*_{R, \varepsilon} s$, then we say that $t$ and $s$ are \emph{$(R,\varepsilon)$-convertible}.
We say that $t$ and $s$ are \emph{$(R,\varepsilon)$-joinable}, denoted by $t \downarrow_{R,\varepsilon} s$, if there exist $\delta_1, \delta_2 \in \Omega$ and a term $r$ such that $t \to^*_{R, \delta_1} r$, $s \to^*_{R, \delta_2} r$, and $\delta_1 \otimes \delta_2 = \varepsilon$. 
\end{definition}
For a given $\qPhi$-graded signature $\cF$ and an $(\qOmega,\qPhi)$-TRS $\cR$, we extend $\mathcal{F}$ to the signature $\mathcal{F}'$ by adding two new function symbols: a constant symbol {\true} and a binary symbol $=^?$ with arity $(\mathbb{1}, \mathbb{1})$. 
    Moreover, we extend the given rewrite relation $R$ to $R'$ by adding the rule $\kappa \Vdash x =^? x \mapsto {\true}$. 
\begin{remark}
    We have $t \downarrow_{R,\varepsilon} s$ iff $t =^? s \to^*_{R',\varepsilon} {\true}$.
\end{remark}

The notion of confluence has been extended to $(\mathbb{\Omega}, \mathbb{\Phi})$-TRSs in \cite{Gavazzo.2022}. 
\begin{definition}[Confluence of $(\qOmega,\qPhi)$-TRSs \cite{Gavazzo.2022}]
	An $(\mathbb{\Omega},\mathbb{\Phi})$-TRS $\mathcal{R}$ is said to be \emph{confluent} if for any terms $t, s_1, s_2$ and quantale elements $\varepsilon_1, \varepsilon_2 \in \Omega$ satisfying $t \to^*_{R, \varepsilon_1} s_1$ and $t \to^*_{R, \varepsilon_2} s_2$, there exists a quantale element $\delta \succsim \varepsilon_1 \otimes \varepsilon_2$ such that $s_1$ and $s_2$ are $(R,\delta)$-joinable. 
\end{definition}
A quantitative analogue of the Church-Rosser result has been provided in \cite{Gavazzo.2022}. 
With slight modifications, it holds also with respect to the rewrite relation described in \Cref{def:graded:trs:no:weakening}. 
\begin{lemma}[Quantitative Church-Rosser] \label{lem:CR}
	Let $\mathcal{R}$ be a confluent $(\mathbb{\Omega}, \mathbb{\Phi})$-TRS, and let $t,s$ be terms and $\varepsilon \in \Omega$. 
    If $t \leftrightarrow_{R, \varepsilon}^* s$, then $t \downarrow_{R, \delta} s$ holds for some $\delta \succsim \varepsilon$, and $t\downarrow_{R,\delta} s$ implies $t \leftrightarrow^*_{R,\delta} s$. 
\end{lemma}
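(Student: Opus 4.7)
My plan is to prove the two conjuncts separately. The second, that $t \downarrow_{R,\delta} s$ implies $t \leftrightarrow^*_{R,\delta} s$, I regard as essentially immediate from the definitions: unfolding joinability gives a term $r$ and $\delta_1, \delta_2$ with $t \to^*_{R,\delta_1} r$, $s \to^*_{R,\delta_2} r$, and $\delta_1 \otimes \delta_2 = \delta$; since $\to_R, \gets_R \subseteq \leftrightarrow_R$, this yields $t \leftrightarrow^*_{R,\delta_1} r$ and $r \leftrightarrow^*_{R,\delta_2} s$, whose composition through $\leftrightarrow_R$ (Definition~\ref{def:qtnr}) is $t \leftrightarrow^*_{R,\delta} s$.

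For the first and main claim, I would induct on the length $n$ of a chain $t = t_0 \leftrightarrow_{R,\varepsilon_1} t_1 \leftrightarrow_{R,\varepsilon_2} \cdots \leftrightarrow_{R,\varepsilon_n} t_n = s$ whose tensor $\varepsilon_1 \otimes \cdots \otimes \varepsilon_n$ equals $\varepsilon$; such a chain exists by unfolding the reflexive–transitive closure in Definition~\ref{def:qtnr}. The base case $n = 0$ forces $t = s$ and $\varepsilon = \kappa$, and the zero-step witness $t \downarrow_{R,\kappa} t$ works with $\delta = \kappa = \top \succsim \varepsilon$ by integrality. For the inductive step, applying the hypothesis to $t_1 \leftrightarrow^*_{R,\zeta} s$ with $\zeta = \varepsilon_2 \otimes \cdots \otimes \varepsilon_n$ yields a common reduct $r$ with $t_1 \to^*_{R,\delta_1'} r$, $s \to^*_{R,\delta_2'} r$ and $\delta_1' \otimes \delta_2' \succsim \zeta$; I then split on the direction of the first step $t \leftrightarrow_{R,\varepsilon_1} t_1$.

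The case of a forward first step $t \to_{R,\varepsilon_1} t_1$ is straightforward: prepending it to $t_1 \to^*_{R,\delta_1'} r$ yields $t \downarrow_{R, \varepsilon_1 \otimes \delta_1' \otimes \delta_2'} s$, and monotonicity of $\otimes$ ensures this degree dominates $\varepsilon_1 \otimes \zeta = \varepsilon$. The main obstacle is the \emph{peak} case of a backward first step $t_1 \to_{R,\varepsilon_1} t$, which produces divergent reductions $t \gets_{R,\varepsilon_1} t_1 \to^*_{R,\delta_1'} r$. Here I would invoke confluence of $\mathcal{R}$ to obtain a common reduct $r'$ with $t \to^*_{R,\eta_1} r'$, $r \to^*_{R,\eta_2} r'$, and $\eta_1 \otimes \eta_2 \succsim \varepsilon_1 \otimes \delta_1'$. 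Concatenating $s \to^*_{R,\delta_2'} r \to^*_{R,\eta_2} r'$ then witnesses $t$ and $s$ as joinable at degree $\eta_1 \otimes \eta_2 \otimes \delta_2'$, which by commutativity and monotonicity of $\otimes$ satisfies $\eta_1 \otimes \eta_2 \otimes \delta_2' \succsim \varepsilon_1 \otimes \delta_1' \otimes \delta_2' \succsim \varepsilon_1 \otimes \zeta = \varepsilon$, closing the induction.
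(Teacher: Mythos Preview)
Your proof is correct and follows essentially the same approach as the paper's: induction on the length of the conversion chain, applying the induction hypothesis to the shorter subchain to obtain joinability, and then case-splitting on the direction of the remaining single step, invoking confluence in the peak case. The only cosmetic difference is that you peel off the \emph{first} step of the chain and apply the hypothesis to the tail, whereas the paper peels off the \emph{last} step and applies the hypothesis to the initial segment; the two arguments are mirror images of one another.
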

\begin{proof}
    Suppose that $t \leftrightarrow_{R,\varepsilon}^n s$. 
    We show that $t \downarrow_{R,\varepsilon} s$ by induction on $n$. 
    The case $n=0$ holds by definition of $\leftrightarrow_R^0$. 
    For the induction step, suppose that the assertion holds up until $n$ and that $t \leftrightarrow_{R,\gamma}^n u \leftrightarrow_{R,\delta} s$, where $\gamma \otimes \delta = \varepsilon$. 
    By induction hypothesis, $t \downarrow_{R,\gamma'} u$ holds for some $\gamma'\succsim \gamma$; that is, there exist $\gamma_1, \gamma_2 \in \Omega$ with $\gamma_1 \otimes \gamma_2 \succsim \gamma'$ and a term $v$ such that  $t \to^*_{R, \gamma_1} v$ and $u \to^*_{R, \gamma_2} v$. 
    
    If $u \gets_{R,\delta} s$, then we obtain $s \to^*_{R,\delta \otimes \gamma_2} v$, and thus, $s \downarrow_{R,\gamma_1 \otimes \gamma_2 \otimes \delta} t$, and $\gamma_1 \otimes \gamma_2 \otimes \delta \succsim \varepsilon$. 

    If $u \to_{R,\delta} s$, then by confluence, there exist $\zeta,\eta \in \Omega$ with $\zeta \otimes \eta \succsim \gamma_2 \otimes \delta$ and a term $w$ such that $v \to_{R, \zeta} w$ and $s \to_{R,\eta} w$. 
    Thus, we obtain $t \to^*_{R, \gamma_1 \otimes \zeta} w$, whence $t \downarrow_{R, \gamma_1 \otimes \zeta \otimes \eta} s$, and we have $\gamma_1 \otimes \zeta \otimes \eta \succsim \gamma_1 \otimes \gamma_2 \otimes \delta \succsim \gamma' \otimes \delta \succsim \varepsilon$. 
   
    The second statement is obvious. 
\end{proof}

The following two lemmas establish the relationship between an $(\qOmega,\qPhi)$-TRS and the corresponding quantitative equational theory.
\begin{lemma}
    Let $\mathcal{R}$ be an $(\qOmega,\qPhi)$-TRS. 
    If $t \leftrightarrow^*_{R,\varepsilon} s$, then $\varepsilon \Vdash s =_{R} t$.
\end{lemma}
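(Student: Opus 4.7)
I would argue by induction on the length $n$ of a conversion witnessing $t \leftrightarrow^*_{R,\varepsilon} s$, i.e.\ a chain $t = u_0 \leftrightarrow_{R,\varepsilon_1} u_1 \leftrightarrow_{R,\varepsilon_2} \cdots \leftrightarrow_{R,\varepsilon_n} u_n = s$ with $\varepsilon = \varepsilon_1 \otimes \cdots \otimes \varepsilon_n$ (and $\varepsilon = \kappa$ if $n=0$), as dictated by the definition of $R^n$ in Definition \ref{def:qtnr}. The base case $n = 0$ is immediate from the \qetrulefont{(Refl)} axiom, which gives $\kappa \Vdash t =_R t$.

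For the inductive step, suppose the claim holds for conversions of length $n$, and consider a conversion $t \leftrightarrow^n_{R,\gamma} u \leftrightarrow_{R,\delta} s$ with $\gamma \otimes \delta = \varepsilon$. By the induction hypothesis, $\gamma \Vdash u =_R t$. The key step is to derive $\delta \Vdash u =_R s$ from the single step $u \leftrightarrow_{R,\delta} s$. Assume without loss of generality that $u \to_{R,\delta} s$ (the reverse case is handled afterwards by \qetrulefont{(Sym)}). By the definition of $\to_R$, there exist a rewrite rule $\zeta \Vdash l \mapsto_R r$, a context term $c$, a position $p \in \pos(c)$, and a substitution $\sigma$ such that $u = c[l\sigma]_p$, $s = c[r\sigma]_p$, and $\delta = \partial_p(c)(\zeta)$. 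Applying \qetrulefont{(Ax)} to the rule yields $\zeta \Vdash l =_R r$; an application of \qetrulefont{(Subst)} with $\sigma$ then gives $\zeta \Vdash l\sigma =_R r\sigma$; and the Graded Context Rule (Lemma \ref{lem:graded_contexts}) finally produces $\partial_p(c)(\zeta) \Vdash c[l\sigma]_p =_R c[r\sigma]_p$, i.e., $\delta \Vdash u =_R s$.

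To assemble the conclusion, I apply \qetrulefont{(Sym)} to $\delta \Vdash u =_R s$ to obtain $\delta \Vdash s =_R u$, and then \qetrulefont{(Trans)} with the induction hypothesis $\gamma \Vdash u =_R t$ to derive $\delta \otimes \gamma \Vdash s =_R t$. Since $\qOmega$ is Lawverean, hence commutative, we have $\delta \otimes \gamma = \gamma \otimes \delta = \varepsilon$, giving the desired $\varepsilon \Vdash s =_R t$.

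I do not expect a genuine obstacle: the inductive skeleton is standard and each single rewrite step is translated into an equational derivation by a uniform use of \qetrulefont{(Ax)}, \qetrulefont{(Subst)}, and the Graded Context Rule. The only point that demands care is matching the \emph{degree} produced by the single-step rewrite, namely $\partial_p(c)(\zeta)$, to the degree produced on the equational side — which is precisely what Lemma \ref{lem:graded_contexts} was designed to provide — and invoking commutativity of $\otimes$ when chaining degrees with \qetrulefont{(Trans)}.
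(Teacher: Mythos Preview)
Your proposal is correct and follows essentially the same approach as the paper: induction on the length of the conversion, with the single-step case handled via \qetrulefont{(Ax)}, \qetrulefont{(Subst)}, and the Graded Context Rule, then assembled using \qetrulefont{(Sym)} and \qetrulefont{(Trans)}. You are in fact slightly more careful than the paper in tracking the direction of the equality and in noting that commutativity of $\otimes$ is what makes the degrees match after \qetrulefont{(Trans)}.
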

\begin{proof}
	We proceed by induction on the number $n$ of steps in $t \leftrightarrow^*_{R,\varepsilon} s$. 
	The case $n=0$ is covered by the \qetrulefont{(Refl)} axiom of quantitative equational reasoning. 
	For the induction step, suppose that we have $t \leftrightarrow^n_{R,\varepsilon_1} u \leftrightarrow_{R,\varepsilon_2} s$, where $\varepsilon_1 \otimes \varepsilon_2 = \varepsilon$.
	Then we have $\varepsilon_1 \Vdash t =_R u$ by induction hypothesis. 
	Since $=_R$ is closed under symmetry, we may assume that $u \to_{R,\varepsilon_2} s$. 
	Then we can write $u = u[l \rho]_p$, $s = u[r \rho]_p$, and $\varepsilon_2 = \partial_{p}(u)(\zeta)$, where $\zeta \Vdash l \mapsto r$ is a fresh variant of a rule in $R$, $p \in \pos(t)$, and $\rho$ is a substitution.  
	By \qetrulefont{(Ax)} and \qetrulefont{(Subst)}, we obtain $\zeta\Vdash l \rho =_R r \rho$; thus, $\varepsilon_1 \Vdash u =_R s$ follows via the \qetrulefont{(Context)} rule from \Cref{lem:graded_contexts}. 
    Hence, we obtain $\varepsilon_1 \otimes \varepsilon_2 \Vdash t =_R s$ via \qetrulefont{(Trans)}.
\end{proof}

\begin{lemma}\label{lem:rewriting:completeness}
	Let $\cR$ be an $(\qOmega,\qPhi)$-TRS.
	If $\varepsilon \Vdash t =_R s$, then there exist $\delta_1,\dots,\delta_n \in \Omega$ such that $\delta_1 \vee \dots \vee \delta_n \succsim \varepsilon$ and $t \leftrightarrow^*_{R,\delta_i} s$ holds for every $i \in \{1,\dots,n\}$.  
\end{lemma}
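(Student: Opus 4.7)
The plan is to proceed by induction on the derivation of $\varepsilon \Vdash t =_R s$ in the graded equational theory (Figure~\ref{fig:qet}). The base cases are straightforward: for \qetrulefont{(Ax)}, an axiom $\varepsilon \Vdash l \mapsto_R r$ yields directly the one-step rewrite $l \to_{R,\varepsilon} r$ at the root position (since $\grdeg_\lambda = \bfone$), so we take $n=1$ with $\delta_1 = \varepsilon$. For \qetrulefont{(Refl)} with $\varepsilon = \kappa$, the pair $(t,\kappa,t)$ lies in $R^0 = \Delta_{\qOmega,T(\cF,\cV)}$, so $t \leftrightarrow^*_{R,\kappa} t$. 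The cases \qetrulefont{(Sym)} and \qetrulefont{(Ord)} follow immediately from the symmetry of $\leftrightarrow_R$ and from $\bigvee_i \delta_i \succsim \varepsilon \succsim \delta$, respectively, using the $\delta_i$'s from the induction hypothesis unchanged. The \qetrulefont{(Join)} case is handled by simply taking the union of the grade families obtained by the induction hypothesis on each premise.

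For \qetrulefont{(Trans)}, given $\varepsilon \Vdash t =_R u$ and $\delta \Vdash u =_R s$ with induction hypotheses producing families $(\delta_i)_{i \in I}$ and $(\eta_j)_{j \in J}$, concatenating a $\delta_i$-chain from $t$ to $u$ with an $\eta_j$-chain from $u$ to $s$ yields $t \leftrightarrow^*_{R,\delta_i \otimes \eta_j} s$ for each pair $(i,j)$. Double distributivity of $\otimes$ over joins then gives
\[
\bigvee_{i,j} (\delta_i \otimes \eta_j) \;=\; \Bigl(\bigvee_i \delta_i\Bigr) \otimes \Bigl(\bigvee_j \eta_j\Bigr) \;\succsim\; \varepsilon \otimes \delta
\]
by monotonicity of $\otimes$, as needed. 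For \qetrulefont{(Subst)}, the key observation is that $\grdeg_p(t)$ depends only on the function symbols along the path from the root to $p$, which are unaffected by substitution. Hence a step $t \to_{R,\gamma} s$ via a rule $\zeta \Vdash l \mapsto r$ at position $p$ with substitution $\rho$ becomes $t\sigma \to_{R,\gamma} s\sigma$ via the same rule at $p$ with substitution $\rho\sigma$ and the same grade $\gamma$; each chain therefore survives $\sigma$ with its grade unchanged.

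The main work is the \qetrulefont{(Ampl)} case. First I would establish an auxiliary context-lifting lemma for $\leftrightarrow^*_R$: if $u \leftrightarrow^*_{R,\varepsilon} v$, then $t[u]_p \leftrightarrow^*_{R,\grdeg_p(t)(\varepsilon)} t[v]_p$. This follows step by step, since a single rewrite $u \leftrightarrow_{R,\grdeg_q(u)(\zeta)} u'$ at position $q$ in $u$ lifts to a rewrite at position $p.q$ in $t[u]_p$ with grade $(\grdeg_p(t) \circ \grdeg_q(u))(\zeta)$, and the grades compose correctly along the chain because $\grdeg_p(t)$ is a CBE and hence preserves $\otimes$. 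Applied argument-by-argument, tuples of conversions $t_i \leftrightarrow^*_{R,\delta_{i,j_i}} s_i$ combine into $f(t_1,\dots,t_n) \leftrightarrow^*_{R,\phi_1(\delta_{1,j_1}) \otimes \cdots \otimes \phi_n(\delta_{n,j_n})} f(s_1,\dots,s_n)$. Finally, ranging over all tuples $(j_1,\dots,j_n)$ and using iterated distributivity of $\otimes$ over joins together with the fact that each $\phi_i$ preserves joins, we obtain
\[
\bigvee_{j_1,\dots,j_n} \phi_1(\delta_{1,j_1}) \otimes \cdots \otimes \phi_n(\delta_{n,j_n}) \;=\; \bigotimes_{i=1}^{n} \phi_i\!\Bigl(\bigvee_{j_i} \delta_{i,j_i}\Bigr) \;\succsim\; \bigotimes_{i=1}^{n} \phi_i(\varepsilon_i)
\]
by monotonicity of each $\phi_i$, which is precisely the required grade. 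I expect the \qetrulefont{(Ampl)} step, with its interplay of context lifting, distributivity of $\otimes$ over joins, and preservation of joins by homomorphisms, to be the main technical obstacle; the remaining cases reduce essentially to bookkeeping.
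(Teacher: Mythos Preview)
Your proposal is correct and follows essentially the same approach as the paper: induction on the derivation, with the same case analysis and the same algebraic tools (distributivity of $\otimes$ over joins for \qetrulefont{(Trans)} and \qetrulefont{(Ampl)}, preservation of joins by CBEs for \qetrulefont{(Ampl)}, and invariance of $\grdeg_p$ under substitution for \qetrulefont{(Subst)}). The only cosmetic difference is that you factor out an explicit context-lifting lemma for $\leftrightarrow^*_R$ before handling \qetrulefont{(Ampl)}, whereas the paper does this inline.
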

\begin{proof}
	We proceed by induction of the complexity of the derivation of $\varepsilon \Vdash t =_R s$.
	We distinguish cases based on the last rule used.
	\begin{description}
		\item[\qetrulefont{(Refl):}]
			In this case, we have $\varepsilon = \kappa$ and $t = s$, and $t \leftrightarrow^0_{R,\kappa} t$ holds by definition. 
		\item[\qetrulefont{(Symm):}]
			If \qetrulefont{(Symm)} is the last rule applied, then $\varepsilon \Vdash s =_R t$ has a proof of smaller complexity. 
			By induction hypothesis, there exist $\delta_1,\dots,\delta_n$ such that $\bigvee_{i=1}^n \delta_i \succsim \varepsilon$ and $s \leftrightarrow^*_{R, \delta_i} t$ holds for any $i$. 
			Then we also have $t \leftrightarrow^*_{R, \delta_i} s$ for any $i$, as desired. 
		\item[\qetrulefont{(Trans):}] 
			We can assume that the last step applied was 
			\[
				\frac{\zeta \Vdash t =_R u, \quad \eta \Vdash t =_R r}{\zeta \otimes \eta \Vdash u =_R s}
			,\] 
			where $\zeta \otimes \eta = \varepsilon$. 
            
			By induction hypothesis, there exist $\zeta_1,\dots,\zeta_N$ and $\eta_1,\dots,\eta_M$ such that $\bigvee_{i=1}^N \zeta_i \succsim  \zeta$, $\bigvee_{j=1}^M \eta_j \succsim \eta$, and $t \leftrightarrow^*_{R, \zeta_i}u$ and $u \leftrightarrow^*_{R, \eta_j} s$ hold for each $i \in \{1,\dots,N\}$ and $j \in \{1,\dots,M\}$. 
			Hence, we have $t \leftrightarrow^*_{R,\zeta_i \otimes \eta_j} s$ for each $i \in \{1,\dots,N\}$ and $j \in \{1,\dots,M\}$, which concludes this part of the proof as we have
			\begin{align*}
				\bigvee_{i,j} \zeta_i \otimes \eta_j &= \bigvee_i \left(\bigvee_j \zeta_i \otimes  \eta_j\right) = \bigvee_i \left(\zeta_i \otimes \bigvee_j \eta_j\right) \\
				& \succsim \bigvee_i \left(\zeta_i \otimes \eta\right) = \left( \bigvee_i \zeta_i \right) \otimes \eta \ \succsim\ \zeta \otimes \eta = \varepsilon
			\end{align*}
			by the distributivity laws. 
		\item[\qetrulefont{(Axiom):}] 
			If $(t,\varepsilon,s) \in R$, then we have $t \to_{R,\varepsilon} s$.
		\item[\qetrulefont{(Ampl):}]
			If $\varepsilon \Vdash t =_R s$ was obtained via \qetrulefont{(Ampl)}, then we can write $\varepsilon = \varphi_1(\varepsilon_1) \otimes \dots  \otimes \varphi_n(\varepsilon_n)$, $t = f(t_1,\dots,t_n)$ and $s = f(s_1,\dots,s_n)$ for some $f \in \mathcal{F}$ with modal arity $(\varphi_1,\dots,\varphi_n)$. 
			By induction hypothesis, for each $i$ in $\{1,\dots,n\}$, there exist $\delta_{i,1},\dots,\delta_{i,N_i} \in \Omega$ such that $t_i \leftrightarrow^*_{R, \delta_{i,j}} s_i$ and $\bigvee_j d_{i,j} \succsim \varepsilon$. 
			Then we have 
			\[
				f(t_1,\dots,t_n) \leftrightarrow^*_{R,\bigotimes_{i=1}^n \varphi_i(\delta_{i,j_i})} f(s_1,\dots,s_n)
			\] 
			for any sequence $(j_1,\dots,j_n)$ such that $1 \leqslant j_i\leqslant N_i$ for each $i \in \{1,\dots,n\}$. 

			Moreover, we have 
			\begin{align*}
				& \hspace{-2em}\bigvee_{\substack{(j_1,\dots,j_n):\\ 1 \leqslant j_i \leqslant N_i\text{ for each $i$}}} \bigotimes_{i=1}^n \varphi_i(\delta_i, j_i)\\
				&= \bigvee_{j_1=1}^{N_1} \dots  \bigvee_{j_n=1}^{N_n} \varphi_1(\delta_{1,j_1}) \otimes \dots \otimes \varphi_n(\delta_{n,j_n})\\
				&= \left(\bigvee_{j_1=1}^{N_1}\varphi_1(\delta_{1,j_1})\right) \otimes  \bigvee_{j_2=1}^{N_2} \dots  \bigvee_{j_n=1}^{N_n} \varphi_2(\delta_{2,j_2}) \otimes \dots \otimes \varphi_n(\delta_{n,j_n})\\
				&= \dots = \left(\bigvee_{j_1 = 1}^{N_1} \varphi_1 (\delta_{1,j_1})\right) \otimes \dots \otimes \left(\bigvee_{j_n =1}^{N_n} \varphi_n(\delta_{n,j_n})\right)\\
				&= \varphi_1\left(\bigvee_{j_1=1}^{N_1}\delta_{1,j_1}\right) \otimes \dots \otimes \varphi_n\left(\bigvee_{j_n=1}^{N_n}\delta_{n,j_n}\right) \succsim \varphi_1(\varepsilon_1) \otimes \dots  \otimes \varphi_n(\varepsilon_n) = \varepsilon
			,\end{align*} 
			as desired.
		\item[\qetrulefont{(Subst):}]
			If $\varepsilon \Vdash t =_R s$ was obtained via \qetrulefont{(Subst)}, we may write $t = t'\sigma$ and $s = s'\sigma$, where $\varepsilon \Vdash t' =_R s'$. 
			By induction hypothesis, there exist $\delta_1,\dots,\delta_n \in \Omega$ such that $\bigvee_{i=1}^n \delta_i \succsim \varepsilon$ and $t' \leftrightarrow^*_{R,\delta_i} s'$ for each $i \in \{1,\dots,n\}$. 
			It suffices to show that also $t' \sigma \leftrightarrow^*_{R,\delta_i} s' \sigma$ holds for each $i \in \{1,\dots,n\}$. 
			So fix $i \in \{1,\dots,n\}$; we prove that $t' \sigma \leftrightarrow^*_{R,\delta_i} s'\sigma$ by induction on the length $N$ of the derivation $t' \leftrightarrow^*_{R, \delta_i} s'$.
			The case $N=0$ is obvious; for the induction step, assume a derivation  $t' \leftrightarrow_{R,\alpha} u \leftrightarrow^N_{R,\beta} s'$, where $\alpha \otimes \beta = \delta_i$. 
			Then we have $u \sigma \leftrightarrow^*_{R,\beta} s'\sigma$. 
			Without loss of generality, assume that $t' \to_{R,\alpha} u$. 
			Then there exist $p \in \pos(t')$, a fresh variant $\zeta \Vdash l \mapsto r$ of a rule in $R$, and a substitution $\rho$ such that $t'|_p = l \rho$, $u = t'[r \rho]_p$, and $\alpha = \partial_p(t')(\zeta)$. 
			Then we have $(t'\sigma)|_p = (t'|_p)\sigma = l \rho \sigma$ and $u \sigma = (t'[r \rho]_p) \sigma = t' \sigma[r \rho \sigma]_p$. 
			Thus, we obtain $t'\sigma \to_{R,\partial_p(t'\sigma)(\zeta)} u \sigma$, and $\partial_p(t'\sigma)(\zeta) = \partial_p(t')(\zeta) = \alpha$. 
			Summing up, we have shown $t \to_{R,\alpha} u \sigma \leftrightarrow^*_{R,\beta} s$, as desired.  

			The case where $t' \gets_{R,\alpha} u$ is treated analogously. 
		\item[\qetrulefont{(Ord):}]
			By induction hypothesis, there exist $\delta_1,\dots,\delta_n \in \Omega$ such that $t \leftrightarrow^*{R, \delta_i} s$ for each $i \in \{1,\dots,n\}$ and $\bigvee_{i=1}^n \delta_i \succsim \zeta$, where $\zeta \succsim \varepsilon$; so $\bigvee_{i=1}^n \delta_i \succsim \varepsilon$. 
		\item[\qetrulefont{(Join):}]
			If we have $\varepsilon_1,\dots,\varepsilon_n$ with $\bigvee_{i=1}^n \varepsilon_i \succsim \varepsilon$ and $\varepsilon_i \Vdash t =_R s$ for each $i \in \{1,\dots,n\}$, then by induction hypothesis, for each $i$, there exist $\delta_{i,1},\dots,\delta_{i,n_i}$ such that $\bigvee_{j=1}^{n_i} \delta_{i,j} \succsim \varepsilon_i$ and $t \leftrightarrow^*_{R,\delta_{i,j}} s$ for each $j \in \{1,\dots,n_i\}$. 
			But then we have $\bigvee_{i,j} \delta_{i,j} \succsim \varepsilon$. 
	\end{description}
\end{proof}

\begin{corollary}
    Let $\mathcal{R}$ be a confluent $(\mathbb{\Omega},\mathbb{\Phi})$-TRS and let $t,s \in T(\mathcal{F},\mathcal{V})$ be two terms.
 	If $\varepsilon \Vdash t =_R s$, then there exist $\delta_1,\dots,\delta_n \in  \Omega$ such that $\delta_1 \vee \dots \vee \delta_n \succsim \varepsilon$ and $t \downarrow_{R,\delta_i} s$ holds for every $i \in \{1,\dots,n\}$.
\end{corollary}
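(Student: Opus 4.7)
The corollary is a straightforward composition of two previously established results: Lemma~\ref{lem:rewriting:completeness}, which turns any equational derivation into a finite family of conversion sequences whose joined degrees dominate $\varepsilon$, and Lemma~\ref{lem:CR} (Quantitative Church--Rosser), which, under confluence, promotes each such conversion into a joinability witness of a potentially ``looser'' (i.e., larger with respect to $\succsim$) degree.

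Concretely, starting from $\varepsilon \Vdash t =_R s$, I would first apply Lemma~\ref{lem:rewriting:completeness} to obtain $\delta_1, \ldots, \delta_n \in \Omega$ with $\bigvee_{i=1}^n \delta_i \succsim \varepsilon$ and $t \leftrightarrow^*_{R,\delta_i} s$ for every $i$. Then, invoking confluence of $\mathcal{R}$, I would apply Lemma~\ref{lem:CR} for each $i$ to extract $\delta_i' \succsim \delta_i$ such that $t \downarrow_{R,\delta_i'} s$. The candidate witnesses for the conclusion are $\delta_1',\dots,\delta_n'$.

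The only remaining verification is that $\bigvee_{i=1}^n \delta_i' \succsim \varepsilon$. This follows from the monotonicity of the join with respect to $\ord$: since $\delta_i \ord \delta_i'$ for every $i$, each $\delta_i$ is bounded above by $\bigvee_{j} \delta_j'$, hence $\bigvee_i \delta_i \ord \bigvee_j \delta_j'$; combined with Step~1 this gives $\bigvee_j \delta_j' \succsim \bigvee_i \delta_i \succsim \varepsilon$. I do not expect any real obstacle: all substantive work is already done by the ingredient lemmas, and this step is simply assembling them, with careful attention to the direction of the quantale order (recalling that $\succsim$ is the converse of $\ord$, so in, e.g., the Lawvere quantale $\mathbb{L}$ it reads as $\leqslant$).
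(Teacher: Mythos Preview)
Your proposal is correct and is exactly the intended argument: the paper states this result as an unproved corollary immediately following Lemma~\ref{lem:rewriting:completeness} and Lemma~\ref{lem:CR}, and your two-step composition of these lemmas (plus the routine join-monotonicity check) is precisely how it is meant to be read off.
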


\section{Graded Quantitative Narrowing}
\label{sect:narrowing}

\begin{definition}[Graded Narrowing]
    \label{def:narrowing_alt}
    Let $\cR = (\cF, R)$ be an 
    $(\qOmega, \qPhi)$-TRS. 
    The graded narrowing relation $\leadsto_R$ generated by $R$ is defined as the least ternary relation such that for any terms $l,r,s\in T(\cF,\cV)$ and quantale elements $\varepsilon,\delta\in \Omega$ we have:
    \begin{align*}
        &\frac{\varepsilon \Vdash l\mapsto_R r}
        {\grdeg_p(s)(\varepsilon) \Vdash s\leadsto_R (s[r\rho]_p)\sigma} 
    \end{align*}
    where $\rho$ is a renaming substitution,  $s$ does not have a common variable with $l\rho$ and $r\rho$, $p\in\nvpos(s)$,
and $\sigma$ is a syntactic most general unifier of $s|_p$ and $l\rho$.
\end{definition}
In such a case, we also write $s \leadsto_{R, \sigma, \partial_p(s)(\varepsilon)} (s [r \rho]_p)\sigma$. 
If needed, we also allow to specify the position $p$ and the rewrite rule $\varepsilon\Vdash l \mapsto r$ as subscripts of the relation $\leadsto_R$.
For $n \in \mathbb{N}$, we define the iterated narrowing relation $\leadsto_R^n$ as follows: 
\begin{align*}
	t \leadsto^0_{R, \sigma, \varepsilon} s &:\Leftrightarrow t = s, \sigma = \Id,\text{ and } \varepsilon = \kappa;\\
	t \leadsto^1_{R, \sigma, \varepsilon} s &:\Leftrightarrow t \leadsto_{R, \sigma, \varepsilon} s;\\
	t \leadsto^{n+1}_{R, \sigma, \varepsilon} s &:\Leftrightarrow \text{\parbox{.8\textwidth}{there exist a term $u$, substitutions $\sigma_1, \sigma_2$ and $\varepsilon_1, \varepsilon_2 \in \Omega$ such that $t \leadsto^n_{R, \sigma_1, \varepsilon_1} u$ and $u \leadsto_{R, \sigma_2, \varepsilon_2} s$, where $\sigma_1 \sigma_2 = \sigma$ and $\varepsilon_1 \otimes \varepsilon_2 = \varepsilon$.}} 
\end{align*}
Analogous to the quantitative rewrite relation, we define the reflexive transitive closure of $\leadsto_R$ by $\leadsto^*_R \coloneqq \bigcup_{n\geqslant 0} \leadsto_R^n$. 

\begin{example}
	Consider the quantitative rewrite system $\mathcal{R} = (\mathcal{F}, R)$ from \cite{DBLP:journals/pacmpl/GavazzoF23}, where $\mathcal{F} = \{Z:(), S:(\Id),+:(\Id,\Id)\}$ and $R$ is the relation over the Lawvere quantale $\mathbb{L}$ given by
	\[
		\{ 0 \Vdash x + Z \mapsto x,\ 0 \Vdash x + S(y) \mapsto S(x+y),\ 1 \Vdash S(x) \mapsto x\}
	.\] 
    The relation $\mapsto$ defines actual distances by eliminating successor functions. The induced convertibility relation coincides with the Euclidean distance on natural numbers.\footnote{This system can be seen as a quantitative extension of Example 3.2 from \cite{Middeldorp.1994}, used there for illustrating the standard narrowing.}
	
    Using $\cR$, we apply quantitative narrowing to find an ``approximate'' solution of the equation $x+1 = 3x$ (whose precise solution, $\frac{1}{2}$, cannot be expressed as a term over $\mathcal{F}$). 

	To this end, we solve the quantitative unification problem $x + S(Z) =_{R,1}^? x + x + x$. 

	A successful narrowing derivation with respect to the extended relation $R'$ (see the part after \Cref{def:convertibility:joinability}) is given below. 
    At each step, the subterms at the narrowing position is underlined, and the rewrite rule applied is indicated in parentheses. 
    Here we denote the rules of $R$ by $r_1$, $r_2$, and $r_3$ (in the order that they appear above), and the additional rule $0 \Vdash x =^? x \mapsto \textsc{true}$ of $R'$ by $r_4$. 
	\begin{align*}
			&\underline{x + S(Z)} =^? (x+x)+x & \\
			& \leadsto_{R',\Id,0}\, S(\underline{x + Z}) =^? (x+x)+x & (\text{by }r_2)\\
			& \leadsto_{R',\Id,0}\, \underline{S(x)} =^? (x+x)+x & (\text{by } r_1)\\
			& \leadsto_{R',\Id,1}\, x =^? \underline{(x+x)+x} & (\text{by } r_3)\\
			& \leadsto_{R',\{x\mapsto Z\},0}\, Z =^? \underline{Z+Z} &  (\text{by } r_1)\\
            &\leadsto_{R',\Id,0} \, \underline{Z =^? Z} & (\text{by } r_1)\\ 
            &\leadsto_{R',\Id,0}\, \textsc{true} & ( \text{by }r_4)
	\end{align*}

	We also show an alternative derivation, leading to a different solution: 
	\begin{align*}
			&\underline{x + S(Z)} =^? (x+x)+x &\\
			& \leadsto_{R',\Id,0}\, S(\underline{x + Z}) =^? (x+x)+x & ( \text{by } r_2)\\
			& \leadsto_{R',\Id,0}\, S(x) =^? \underline{(x+x)+x} & (\text{by } r_1)\\
			& \leadsto_{R',\{x\mapsto S(y)\},0}\, S(S(y)) =^? S(\underline{(S(y)+S(y))+y}) & (\text{by } r_2)\\
			& \leadsto_{R',\{y\mapsto Z\},0}\, S(S(Z)) =^? S(\underline{S(Z)+S(Z)}) & (\text{by } r_1)\\
			& \leadsto_{R',\Id,0}\, S(S(Z)) =^? S(S(\underline{S(Z)+Z})) & (\text{by } r_2)\\
			& \leadsto_{R',\Id,0}\, S(S(Z)) =^? S(S(\underline{S(Z)})) & (\text{by } r_1)\\
			& \leadsto_{R',\Id,1}\, \underline{S(S(Z)) =^? S(S(Z))} & (\text{by }r_3)\\
            &\leadsto_{R',\Id,0}\, \textsc{true} & (\text{by } r_4)\\
	\end{align*}
    That is, we have obtained the approximate solutions $\{x\mapsto Z\}$ and $\{x \mapsto S(Z)\}$. 
\end{example}

The notion of a \emph{basic} narrowing derivation is defined as in \cite{DBLP:journals/aaecc/MiddeldorpH94}. 
\begin{definition}[Basic Graded Narrowing]
	Let 
    \[t_1 \leadsto_{R,\sigma_1,\zeta_1,p_1, \varepsilon_1 \Vdash l_1 \mapsto r_1} \dots \leadsto_{R,\sigma_{n-1},\zeta_{n-1},p_{n-1}, \varepsilon_{n-1} \Vdash l_{n-1} \mapsto r_{n-1}} t_n\] 
    be a narrowing derivation. 
	We inductively define the sets of \emph{basic positions} of $t_1,\dots,t_n$ with respect to the above derivation by
	\begin{align*}
		B_1 &\coloneqq \nvpos(t_1), &
		B_{i+1} &\coloneqq \{q \in B_i \mid p_i \not\sqsubseteq q\} \cup \{p_i . q \mid q \in \nvpos(r_i)\}
	.\end{align*} 
	When the narrowing derivation is clear from the context, we refer to $B_i$ as the set of basic positions of $t_i$, denoted $\bpos(t_i)$. 
	We say that the above narrowing derivation is basic if $p_i \in B_i$ holds for each $i \in \{1,\dots,n-1\}$. 
\end{definition}
Requiring that our narrowing derivations are basic means that narrowing may not be applied to terms that have been introduced by a narrowing substitution in a previous step. 
This requirement prohibits some non-terminating derivations while maintaining completeness in the non-quantitative setting \cite{Middeldorp.1994}.
\begin{remark}\label{rem:rightground:basic}
	If $t$ is a linear term and $\cR$ is right-ground, then every narrowing derivation $t \leadsto^*_{R, \varepsilon} s$ is basic. 

	Indeed, we can write the narrowing derivation as
	\[
		t = t_1 \leadsto_{R, \sigma_1,\zeta_1, \varepsilon_1 \Vdash l_1 \mapsto r_1, p_1} \dots \leadsto_{R, \sigma_{n-1}, \zeta_{n-1},\varepsilon_{n-1}\Vdash l_{n-1} \mapsto  r_{n-1}, p_{n-1}} t_n = s
	.\] 
	We will prove the claim by showing that $\bpos(t_i) = \nvpos(t_i)$ for $i=1,\dots,n-1$. 
	We proceed by induction on $i$. 
	The case $i=1$ holds by definition. 

	For the induction step, suppose that we have $\bpos(t_i) = \nvpos(t_i)$. 
	Note that we have $t_{i+1} = t_i[r_i]_{p_i}\sigma_i = t_i[r_i \sigma_i]_{p_i} = t_i[r_i]_{p_i}$, where the second equality holds since by linearity of $t$, the variables in $\dom(\sigma_i)$ do not appear in $t_i$ except possibly below position $p_i$, and the last equality holds as $r_i$ is ground. 
	Thus, we have 
	\begin{align*}
		\bpos(t_{i+1}) &= \{q \in \bpos(t_i) \mid p_i \not\sqsubseteq q\}\cup \{p_i \cdot q \mid q \in \nvpos(r_i)\}\\
			       &= \{q \in \nvpos(t_i) \mid p_i \not\sqsubseteq q\}\cup \{p_i \cdot q \mid q \in \nvpos(r_i)\}\\
			       &= \nvpos(t_i[r_i]_{p_i})  = \nvpos(t_{i+1}). 
	\end{align*}
\end{remark}

\subsection{Quantitative Narrowing Rules: the \narrow\  Calculus}
\label{subsect:narrowingrules}

In the following, we propose a set of transformation rules for $E$-unification ({\narrow} calculus), inspired by basic narrowing \cite{Hullot1980 ,Baader.2001,DBLP:conf/alp/MiddeldorpH92}.
We operate on configurations of the form $e;C;\sigma;\delta$, where: 
\begin{itemize}
	\item $e$ is a term over $\mathcal{F}'$ (the current state of the problem),
	\item $C$ is a set of equations (the current set of constraints), 
	\item $\sigma$ is a substitution (the solution computed thus far),
	\item $\delta$ is an element of $\Omega$ (the current degree of approximation). 
\end{itemize}
The inference rules of the calculus {\narrow} can be seen in Table~\ref{tab:qnarrowInferenceRules}. \vspace{2mm}

\fbox{
\begin{minipage}{.8\textwidth}
\captionof{table}{\narrow\ inference rules}\label{tab:qnarrowInferenceRules}
\vspace{-6mm}

\begin{align*}
    & \rulefont{LP:}\ \textbf{Lazy Paramodulation}\\*[.1em]
	& \quad e[t]_p; C; \sigma; \delta \Longrightarrow_{\grdeg_p(e)(\varepsilon)} e[r]_p; \{l \sigma = t \sigma\} \cup C; \sigma; \delta \otimes \grdeg_p(e)(\varepsilon), \\*
	& \parbox{\textwidth}{where $e\neq{\true}$, $p$ is a non-variable position of $e$, and $\varepsilon \Vdash l \mapsto r$ is a \textit{fresh} variant of a rule in $R$.}\\[.3em]
	& \rulefont{SU:}\ \textbf{Syntactic Unification}\\*[.1em]
	& \quad e; C; \sigma; \delta \Longrightarrow_{\kappa} e; \emptyset; \sigma \rho; \delta, \\*
	& \parbox{\textwidth}{where $C \neq \emptyset$ and $\rho$ is a most general unifier of $C$.}\\[.3em]
    & \rulefont{Cla:}\ \textbf{Clash}\\*[.1em]
	& \quad e; C; \sigma; \delta \Longrightarrow_{\kappa} \textbf{F},\quad
	\text{if $C$ is not unifiable.}\\[.3em]
	& \rulefont{Con:}\ \textbf{Constrain}\\*[.1em]
	& \quad e; C; \sigma; \delta \Longrightarrow_{\kappa} {\true}; C \cup \{e \sigma\}; \sigma; \delta, \quad
    \text{if $e\neq {\true}$.}
\end{align*}
\end{minipage}
}
\vspace{2ex}

\begin{remark}\label{remark:idemp_substitutions}
Applying the \narrow\ inference rules must guarantee that the substitution built in the configuration, say $\sigma$, is well-defined and idempotent.  
To preserve this property in formalizations, the notion of \mgu\ is adapted so that the $\rho$, computed in the application of the rule (\rulefont{SU}) satisfies 
$\dom(\rho)\subseteq\cV(\ran(\sigma)) \mbox{ and } \cV(\ran(\rho))$ is a set of fresh variables regarding the set of all variables used in the whole process.\footnote{It can be achieved, e.g., by defining substitution comparison with respect to a fixed set of variables, which in case of unifiers is the set of variables of the unification problem.} This condition simplifies formal proofs since substitution $\sigma\rho$ built in the rule (\rulefont{SU}) is easily proved to be idempotent.   

A second important observation justifying the possibility of constructing an \mgu\ $\rho$ such that $\dom(\rho)$ is disjoint from $\cV(\ran(\sigma))$ is that all equations included in the second component of the configuration, say $C$, by the rules (\rulefont{LP}) and (\rulefont{Con}) are instantiated by $\sigma$, implying that these equations have no variables in $\dom(\sigma)$.  Therefore, $\rho$ does not need to instantiate variables in $\dom(\sigma)$.  Finally, the selection of variants of the rules with fresh variables for each application of the rule \rulefont{LP} is also required. Such problems are more apparent in computational frameworks of unification. For instance, \cite{DBLP:journals/jar/AyalaRinconFSKN24} uses a notion of ``nice substitutions'' and of ``correct input renamings'' to formalize the correctness of AC-unification. Also, \cite{DBLP:journals/mscs/Ayala-RinconSFS21} and \cite{DBLP:conf/mkm/AyalaRinconFSKN23} use specialized notions of substitution and a mechanism to ``protect variables''  to verify nominal C-unification and nominal AC-matching, respectively. 
\end{remark}

\begin{remark}
   The {\narrow} rules, as defined, may lead to extensive branching. To mitigate this issue, one can impose restrictions, for example, requiring that in \rulefont{LP}, the terms $l$ and $t$ have the same head symbol, or even that they be unifiable. However, such techniques are beyond the scope of this paper.
\end{remark}

The following lemma confirms that the calculus {\narrow} captures basic narrowing. 
\begin{lemma}\label{lem:narrowing:relation:to:calculus}
	If $t \leadsto^*_{R,\varepsilon,\sigma} s$ is obtained via a basic narrowing derivation, then there exists a derivation $t; \emptyset; \Id; \kappa \Longrightarrow_\varepsilon^* s'; \emptyset; \sigma; \varepsilon$ in {\narrow}, where $s'\sigma = s$ and the set of basic positions of $s$ is precisely the set of non-variable positions of $s'$. 
\end{lemma}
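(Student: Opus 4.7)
The plan is to prove the statement by induction on the length $n$ of the basic narrowing derivation $t \leadsto^n_{R,\sigma,\varepsilon} s$. The base case $n=0$ is immediate: $t=s$, $\sigma=\Id$, $\varepsilon=\kappa$, and the empty $\narrow$-derivation witnesses the claim with $s':= t$, where $\bpos(s) = \nvpos(t) = \nvpos(s')$ holds by definition of basic positions.

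For the inductive step, I would decompose the derivation as $t \leadsto^n_{R,\sigma_1,\varepsilon_1} u \leadsto_{R,\sigma_2,\varepsilon_2} s$, with $\sigma = \sigma_1\sigma_2$ and $\varepsilon = \varepsilon_1 \otimes \varepsilon_2$. The last narrowing step is determined by a basic position $p \in \bpos(u)$, a fresh variant $\varepsilon_2' \Vdash l \mapsto r$ of a rule in $R$, the mgu $\sigma_2 = \mgu(u|_p, l)$, and the grade $\varepsilon_2 = \grdeg_p(u)(\varepsilon_2')$, with $s = (u[r]_p)\sigma_2$. By the induction hypothesis, there is a derivation $t;\emptyset;\Id;\kappa \Longrightarrow^*_{\varepsilon_1} u';\emptyset;\sigma_1;\varepsilon_1$ in $\narrow$ with $u'\sigma_1 = u$ and $\bpos(u) = \nvpos(u')$. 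In particular, $p \in \nvpos(u')$, so \rulefont{LP} is applicable at $p$ with the same fresh variant, producing the configuration $u'[r]_p; \{l\sigma_1 = (u'|_p)\sigma_1\}; \sigma_1; \varepsilon_1 \otimes \grdeg_p(u')(\varepsilon_2')$.

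Three small observations then connect this configuration to the target. Since $p \in \nvpos(u')$ and $u = u'\sigma_1$, we have $\grdeg_p(u') = \grdeg_p(u)$ and $(u'|_p)\sigma_1 = u|_p$; and by freshness of the rule's variant, $l\sigma_1 = l$ and $r\sigma_1 = r$. Thus the constraint equation is exactly $\{l = u|_p\}$, which is solved by the mgu $\sigma_2$, and the grade equals $\varepsilon_1 \otimes \varepsilon_2$. Applying \rulefont{SU} therefore yields $u'[r]_p; \emptyset; \sigma_1\sigma_2; \varepsilon_1 \otimes \varepsilon_2$. Setting $s':= u'[r]_p$, we compute $s'\sigma = (u'[r]_p)\sigma_1\sigma_2 = (u'\sigma_1[r]_p)\sigma_2 = (u[r]_p)\sigma_2 = s$, as required. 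The positions claim follows from the inductive definition of $\bpos$, namely $\bpos(s) = \{q \in \bpos(u) : p \not\sqsubseteq q\} \cup \{p.q : q \in \nvpos(r)\}$, combined with $\bpos(u) = \nvpos(u')$, which together equal $\nvpos(u'[r]_p) = \nvpos(s')$.

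The main obstacle I anticipate is the careful bookkeeping around fresh variants and the domains of the substitutions produced by \rulefont{SU}. One must ensure that the freshness assumptions used for the narrowing step and for \rulefont{LP} are compatible, so that $l\sigma_1 = l$ and $r\sigma_1 = r$, and that the mgu computed by \rulefont{SU} can be chosen in the form described in Remark~\ref{remark:idemp_substitutions}, i.e.\ with $\dom(\sigma_2)$ disjoint from $\cV(\ran(\sigma_1))$ and with fresh range. This guarantees both that the composition $\sigma_1\sigma_2$ is idempotent and well-defined, and that it precisely recovers the substitution built by the basic narrowing derivation.
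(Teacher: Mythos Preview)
Your proof is correct and follows essentially the same approach as the paper: induction on the length of the basic narrowing derivation, decomposing at the last step, using the induction hypothesis to obtain the \narrow\ derivation up to $u';\emptyset;\sigma_1;\varepsilon_1$, and then simulating the final narrowing step by an application of \rulefont{LP} followed by \rulefont{SU}. Your treatment is in fact slightly more careful than the paper's in distinguishing the rule degree $\varepsilon_2'$ from the narrowing-step degree $\varepsilon_2 = \grdeg_p(u)(\varepsilon_2')$ and in explicitly noting $\grdeg_p(u') = \grdeg_p(u)$.
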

\begin{proof}
	We proceed by induction on the number of steps in the basic narrowing derivation $t \leadsto^*_{R, \varepsilon, \sigma} s$. 
	If $n=0$, then we have $t = s$,  $\varepsilon=\kappa$, $\sigma = \Id$, and $\bpos(t) = \nvpos(t)$; so the empty derivation in {\narrow} works. 

	For the induction step, suppose that the assertion holds up until $n$, and that we are given a basic narrowing derivation, 
	\[
		t \leadsto^n_{R, \varepsilon_1, \sigma_1} u \leadsto_{R, \varepsilon_2, \sigma_2} s
	,\] 
	where $\sigma = \sigma_1 \sigma_2$ and $\varepsilon = \varepsilon_1 \otimes \varepsilon_2$. 
	By induction hypothesis, {\narrow} admits a derivation $t; \emptyset; \Id; \kappa \Longrightarrow_{\varepsilon_1}^* u'; \emptyset; \sigma_1; \varepsilon_1$ such that $u'\sigma_1 = u$ and $\bpos(u) = \nvpos(u')$. 
	Let $p \in \pos(u)$ be the position in which the narrowing step $u \leadsto_{R, \varepsilon_2, \sigma_2} s$ takes place; that is, there exists a fresh variant $\varepsilon_2 \Vdash l \mapsto r$ of a rule in $R$ such that $u|_p$ and $l$ are syntactically unifiable with $\mgu(u|_p,l) = \sigma_2$, and we have $s = u[r]_p \sigma_2$.
	Since we have $l = l \sigma_1$ (as $l$ is fresh) and $u|_p= (u' \sigma_1)|_p = u'|_p \sigma_1$, we can also write $\sigma_2 = \mgu(u'|_p \sigma_1, l \sigma_1)$.
	Since the derivation is basic, we have $p \in \bpos(u) = \nvpos(u')$. 
	
	We thus obtain the derivation 
	\begin{align*}
		u'; \emptyset; \sigma_1; \varepsilon_1 &\Longrightarrow_{\rulefont{LP},\varepsilon_2} u'[r]_p; \{l \sigma_1 =^? u'|_p \sigma_1\}; \sigma_1; \varepsilon_1 \otimes \varepsilon_2\\
        &\Longrightarrow_{\rulefont{SU},\kappa} u'[r]_p;\emptyset;\sigma_1 \sigma_2; \varepsilon_1 \otimes \varepsilon_2
	,\end{align*} 
	and $u'[r]_p \sigma_1 \sigma_2 = u[r]_p \sigma_2 = s$. 
	It remains to verify that $\bpos(s) = \nvpos(u'[r]_p)$. 
	By definition, we have
	\begin{align*}
		\bpos(s) &=  \left\{q \in \bpos(u) \mid p \not \sqsubseteq q\right\} \cup \left\{p.q\mid q \in \nvpos(r)\right\}\\
			 &= \left\{q \in \nvpos(u') \mid p \not\sqsubseteq q\right\} \cup \left\{p.q\mid q \in \nvpos(r)\right\}\\
			 &= \nvpos(u'[r]_p)
	,\end{align*}
	concluding the proof.
\end{proof}

Conversely, it is easy to see that if {\narrow} uses the strategy of applying \rulefont{SU} immediately after an application of \rulefont{LP}, then from a non-failing {\narrow} derivation one can directly obtain the corresponding basic narrowing derivation. 

\subsection{Soundness of \narrow}
\label{subsect:soundness}

\begin{lemma}[Unification of the set of equations by \narrow]\label{lem:unifSetEquations}
If there is a derivation $e; C; \sigma; \delta \ \Longrightarrow^* \ e'; \emptyset;  \sigma'; \delta'$, using the rules from {\narrow}, then for all $s=t\in C$, $ s \sigma' = t\sigma'$. 
\end{lemma}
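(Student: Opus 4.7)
The plan is to proceed by induction on the length $n$ of the derivation $e; C; \sigma; \delta \Longrightarrow^* e'; \emptyset; \sigma'; \delta'$. For $n = 0$ one has $C = \emptyset$ and the claim holds vacuously.

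For the inductive step, I would consider the first step $e; C; \sigma; \delta \Longrightarrow e_1; C_1; \sigma_1; \delta_1$ and apply the inductive hypothesis to the tail of length $n-1$. The rule \rulefont{Cla} cannot be applied first, since the derivation does not end in $\mathbf{F}$. For \rulefont{LP} and \rulefont{Con}, the substitution is preserved ($\sigma_1 = \sigma$) and the constraint set only grows ($C \subseteq C_1$); hence every equation of $C$ lies already in $C_1$ and so is unified by $\sigma'$ by the inductive hypothesis.

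The main challenge is the \rulefont{SU} case, where $C_1 = \emptyset$ and $\sigma_1 = \sigma \rho$ with $\rho = \mgu(C)$, so the inductive hypothesis yields no direct information. I would resolve it using two observations. First, only \rulefont{SU} modifies the substitution, and it does so by composition on the right, so $\sigma' = \sigma \rho \tau$ for some substitution $\tau$. Second, by definition of $\mgu$, $s \rho = t \rho$ for every $s = t \in C$. To pass from $s\rho = t\rho$ to the desired $s \sigma \rho \tau = t \sigma \rho \tau$, I would invoke the invariant from Remark~\ref{remark:idemp_substitutions}: equations entering the constraint component via \rulefont{LP} and \rulefont{Con} are pre-instantiated by $\sigma$, so (combined with the idempotency of $\sigma$) the variables occurring in the equations of $C$ are disjoint from $\dom(\sigma)$. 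Hence $s\sigma = s$ and $t\sigma = t$, which gives $s\sigma' = s \sigma \rho \tau = s \rho \tau = t \rho \tau = t \sigma \rho \tau = t\sigma'$, as desired.

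The delicate point, and the main obstacle, is thus the interaction between $\sigma$ and $\rho$ in the \rulefont{SU} case; it is handled by maintaining the above ``no-clash'' invariant throughout the derivation. This invariant is trivially preserved by \rulefont{LP} and \rulefont{Con} (since they only add equations of the form $l\sigma = t\sigma$ or $e\sigma$), and compatibly extended by \rulefont{SU} thanks to the choice of $\rho$ with fresh range and $\dom(\rho) \subseteq \cV(\ran(\sigma))$ dictated by Remark~\ref{remark:idemp_substitutions}, which simultaneously guarantees that $\sigma\rho$ remains idempotent.
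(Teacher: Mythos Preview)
Your proposal is correct and follows essentially the same approach as the paper's proof: induction on the derivation length with a case analysis on the first rule, where the \rulefont{LP} and \rulefont{Con} cases follow immediately from $C \subseteq C_1$ and the \rulefont{SU} case is handled via the invariant of Remark~\ref{remark:idemp_substitutions}. Your treatment of the \rulefont{SU} case is in fact more explicit than the paper's, which simply asserts that $s_i\sigma_1 = t_i\sigma_1$ ``by Remark~\ref{remark:idemp_substitutions}'' and that $\sigma'$ is a specialization of $\sigma_1$; you spell out both points (the no-clash invariant giving $s\sigma = s$, and the observation that only \rulefont{SU} extends the substitution, by right composition).
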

\begin{proof}
By induction on the length $n$ of \narrow\ derivations. 

Case $n=0$. It holds vacuously since $C=\emptyset$.

Case $n>0$.  Assume the derivation is of the form

\[e; C; \sigma; \delta \ \Longrightarrow e_1;C_1;\sigma_1;\delta_1\Longrightarrow^{n-1} e';\emptyset;\sigma'; \delta_n\]

We proceed by case analysis on the \narrow\ rule applied in the first step of the derivation, and assuming by induction hypothesis that for all $s_i=t_i\in C_1$, $ s_i\sigma' = t_i\sigma'$.

\begin{itemize}
\item (\rulefont{LP}):  $e[t]_p;C;S;\delta\Longrightarrow_{\rulefont{LP}} e[r]_p; \{l\sigma=t\sigma\}\cup C; \sigma; \delta\otimes\grdeg_p(e)(\varepsilon)$.  By induction hypothesis, for all $s_i=t_i\in C$, $ s_i\sigma' = t_i\sigma'$.

\item (\rulefont{SU}): $e;C;\sigma;\delta\Longrightarrow_{\rulefont{SU}}  e;\emptyset;\sigma_1; \delta$, where $\sigma_1=\sigma\rho$ and $\rho = \mgu(C)$.  Thus, for all $s_i=t_i\in C$, $ s_i\sigma' = t_i\sigma'$ since $\sigma'$ is a specialization of the substitution $\sigma_1$, and $ s_i\sigma_1 = t_i\sigma_1$ by Remark \ref{remark:idemp_substitutions}.  

\item (\rulefont{Con}): $e;C;\sigma;\delta\Longrightarrow_{\rulefont{Con}}  {\true}; C\cup\{e\sigma\}; \sigma;\delta$.  By induction hypothesis, for all $s_i=t_i\in C$, $ s_i\sigma' = t_i\sigma'$.   

\end{itemize}

Application of the rule (\rulefont{Cla}) need not be considered.  
\end{proof}

\begin{theorem}[Soundness of \narrow]
If  $t=^?s; C; \sigma; \delta \ \Longrightarrow_{\varepsilon}^+ \ {\true}; \allowbreak  \emptyset;  \sigma'; \delta'$ is a derivation using the rules from {\narrow}, then
		 $\varepsilon \Vdash t\sigma'=_R s\sigma'$. 
\end{theorem}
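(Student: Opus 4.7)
The plan is to prove a stronger invariant by induction on the length $n$ of the \narrow\ derivation, namely: whenever $e; C; \sigma; \delta \Longrightarrow^*_{\varepsilon} \true; \emptyset; \sigma'; \delta'$, we have $e\sigma' \to^*_{R',\varepsilon} \true$. Applied to the derivation in the hypothesis with $e = t =^? s$, this yields $t\sigma' =^? s\sigma' \to^*_{R',\varepsilon} \true$. The remark characterising $\downarrow_{R,\varepsilon}$ as reduction to $\true$ in $R'$ then gives $t\sigma' \downarrow_{R,\varepsilon} s\sigma'$, hence $t\sigma' \leftrightarrow^*_{R,\varepsilon} s\sigma'$, and the unlabeled lemma preceding \Cref{lem:rewriting:completeness} delivers $\varepsilon \Vdash t\sigma' =_R s\sigma'$, as desired.

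For the induction, the base case $n = 0$ is trivial since then $e = \true$ and $\varepsilon = \kappa$. For the inductive step I case-split on the first rule applied, noting that \rulefont{Cla} cannot appear in a successful derivation. In the \rulefont{LP} case, the step has the form $e[t]_p; C; \sigma; \delta \Longrightarrow_\gamma e[r]_p; \{l\sigma = t\sigma\}\cup C; \sigma; \delta\otimes\gamma$ with $\gamma = \grdeg_p(e)(\varepsilon_r)$ for a fresh variant $\varepsilon_r \Vdash l \mapsto r$ of a rule in $R'$. By \Cref{lem:unifSetEquations} the final $\sigma'$ unifies $l\sigma$ and $t\sigma$, hence $l\sigma' = t\sigma'$; instantiating the rule under $\sigma'$ at position $p$ then produces $e[t]_p\sigma' = (e\sigma')[l\sigma']_p \to_{R',\gamma} (e\sigma')[r\sigma']_p = e[r]_p\sigma'$, where the grade equality $\grdeg_p(e\sigma') = \grdeg_p(e)$ holds because $p$ is a non-variable position of $e$. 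Chaining with the inductive hypothesis gives $e\sigma' \to^*_{R',\gamma \otimes \varepsilon'} \true$ with $\gamma \otimes \varepsilon' = \varepsilon$. In the \rulefont{SU} case $e$ is unchanged and $\gamma = \kappa$, so the claim transfers directly from the inductive hypothesis. In the \rulefont{Con} case, $e_1 = \true$, $\gamma = \kappa$, and the remaining derivation must consist of exactly one \rulefont{SU} step (of degree $\kappa$), so $\varepsilon = \kappa$; writing $e = u =^? v$, \Cref{lem:unifSetEquations} yields $u\sigma' = v\sigma'$, whence one application of the added rule $\kappa \Vdash x =^? x \mapsto \true$ gives $e\sigma' \to_{R',\kappa} \true$.

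The main delicacy is in the \rulefont{LP} case, where one has to justify that the computed $\sigma'$ refines the intermediate $\sigma$, so that identities such as $(l\sigma)\sigma' = l\sigma'$ and $(t\sigma)\sigma' = t\sigma'$ hold and allow the \narrow-level step to be mirrored as a genuine rewrite under $\sigma'$. This refinement property is precisely what the variable discipline and the tailored idempotent-\mgu\ convention recalled in \Cref{remark:idemp_substitutions} are designed to ensure; the plan is to invoke that discipline rather than re-establish it here.
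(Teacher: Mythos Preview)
Your argument is correct, and the overall architecture matches the paper's: induction on derivation length, case split on the first rule, and reliance on \Cref{lem:unifSetEquations} together with the idempotency discipline of \Cref{remark:idemp_substitutions} to turn the accumulated constraint into a syntactic equality under~$\sigma'$.

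Where you genuinely diverge is in the choice of invariant. The paper proves directly that $\varepsilon \Vdash e\sigma'$ holds in the graded equational theory, and in the \rulefont{LP} case it unpacks $e$ as $\lhs(e)=^?\rhs(e)$, splits on whether $p=1.q$ or $p=2.q$, and then reasons inside $=_R$ using \Cref{cor:GradedReplacementInContext}, \qetrulefont{(Sym)} and \qetrulefont{(Trans)}. You instead strengthen the invariant to $e\sigma' \to^*_{R',\varepsilon} \true$ and only pass to $=_R$ at the very end via the Remark characterising $\downarrow_{R,\varepsilon}$, the second clause of \Cref{lem:CR}, and the unnamed lemma before \Cref{lem:rewriting:completeness}. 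This buys you a cleaner \rulefont{LP} case: one genuine rewrite step $(e\sigma')[l\sigma']_p \to_{R',\gamma} (e\sigma')[r\sigma']_p$ replaces the paper's two-sided equational manipulation, and the observation $\grdeg_p(e\sigma')=\grdeg_p(e)$ for $p\in\nvpos(e)$ handles the grade uniformly without distinguishing left from right of $=^?$. The price is the extra end-game of threading through joinability and convertibility, which the paper avoids by staying equational throughout. Both routes are sound; yours is slightly more modular in its use of the $R'$ machinery already set up in the paper.
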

\begin{proof}
    For notational convenience, we write $e$ for the term $t=^? s$. 
    For any term $u$ of the form $l =^? r$, we use the notation $\lhs(u)$ and $\rhs(u)$ to refer to $l$ and $r$, respectively. 
    
	The proof is by induction on the length of the derivation, say $n$.  
	Suppose the derivation is of the form

	\[
		\mathfrak{C}_0 \Longrightarrow_{\zeta} \mathfrak{C}_1 \Longrightarrow_\eta^*  \mathfrak{C}_n
	,\]
	where $\zeta \otimes \eta = \varepsilon$, $\mathfrak{C}_0 = e; C; \sigma; \delta$, $\mathfrak{C}_n = {\true}; \emptyset; \sigma'; \delta'$ and $\mathfrak{C}_i = e_i; C_i; \sigma_i; \delta_i$ for $i=0,\dots,n$ ($\sigma=\sigma_0$ and $\sigma_n= \sigma'$).

	In the case $n = 1$, the sole applicable rule is \rulefont{SU}, so we have $\zeta=\kappa$, and $\mathfrak{C}_0$ has to be of the form ${\true}; C; \sigma; \delta$ and $\mathfrak{C}_1 = {\true}; \emptyset; \sigma_1; \delta$, where $\sigma_1=\sigma\rho$. 
	Since $\kappa\Vdash {\true}$, we also have $\delta\Vdash ({\true})\sigma_1$.

	Proceeding to the induction step, let $n \geqslant 1$ and suppose the statement of the lemma holds up until $n-1$. 
	We prove that the lemma holds for $n$.
	Applying the induction hypothesis to the derivation 
	\[
		\mathfrak{C}_1 \Longrightarrow_\eta^+ \mathfrak{C}_{n},
	\]
	we obtain $\eta \Vdash e_1 \sigma'$. 

	We distinguish cases according to the {\narrow} rule yielding the first step $\mathfrak{C}_0 \Longrightarrow_\zeta \mathfrak{C}_1$.

	\begin{itemize}
		\item Case (\rulefont{LP}): $\mathfrak{C}_0 = e[t]_p;C;\sigma;\delta \Longrightarrow_{\partial_p(e)(\varepsilon)} e[r]_p; \{l\sigma = t\sigma\}\cup C ; \sigma; \delta\otimes\partial_p(e)(\varepsilon)$, where $e=e[t]_p$ and $\iota\Vdash l\rightarrow r\in \mathcal{R}$.   
			On the one side, by induction hypothesis,  $\eta \Vdash e[r]_p\sigma'$. 
			Firstly, assuming that $p=1 . q$, this means that $\eta \Vdash \lhs(e)[r]_q\sigma' =_E \rhs(e)\sigma'$. 
			Notice that $ l\sigma' =_E t\sigma'$ by Lemma \ref{lem:unifSetEquations}.
			Then, by applying Corollary \ref{cor:GradedReplacementInContext} and \mbox{\sf (Sym)} in Fig. \ref{fig:qet} one obtains $\kappa\Vdash \lhs(e)[t]_q\sigma' =_E \lhs(e)[l]_q\sigma'$. 
			Since  $\partial_{q}(\lhs(e))(\iota)\Vdash \lhs(e)[l]_q\sigma' =_E \lhs(e)[r]_q\sigma'$, by \mbox{\sf (Trans)} in Fig. \ref{fig:qet}, $\partial_{q}(\lhs(e))(\iota)\Vdash \lhs(e)[t]_q\sigma' = \lhs(e)[r]_q\sigma'$.

			From the above two main graded equations, using \mbox{\sf (Trans)}, we obtain  the graded equation $\partial_q(\lhs(e))(\iota)\otimes \eta \Vdash \lhs(e)[t]_q \sigma' =_E \rhs(e)\sigma'$, abbreviated as $\partial_q(\lhs(e))(\iota)\otimes \eta \Vdash e\sigma'$.
			Additionally, $\partial_q(\lhs(e))(\iota) = \partial_p(e)(\iota)$ since the modal arity of the symbol $=^?$ is $(\mathbb{1}, \mathbb{1})$.
			Therefore, $\partial_p(e)(\iota)\otimes \eta \Vdash e\sigma'$.  

			Secondly, the case in which $p=2. q$ is similar, except that both \mbox{\sf (Trans)} and \mbox{\sf (Sym)} are required in the second step of the above analysis; indeed, what we obtain is that $\eta\Vdash \lhs(e)\sigma' = \rhs(e)[r]_q\sigma'$, by induction hypothesis, and that  $\partial_{q}(\rhs(e))(\iota)\Vdash \rhs(e)[t]_q\sigma' = \rhs(e)[r]_q\sigma'$.
			Because of that, the necessity of using \mbox{\sf (Sym)}.  

		\item Case (\rulefont{SU}):  $\mathfrak{C}_0 = e;C;\sigma;\delta \Longrightarrow_\kappa e; \emptyset ; \sigma\rho; \delta$, for $\rho=\mgu(C)$.
		This case is clear because $\kappa\otimes\zeta = \zeta$ and $\zeta\Vdash e\sigma'$ by induction hypothesis. 
		\item Case (\rulefont{Con}): $\mathfrak{C}_0 = e;C;\sigma;\delta\Longrightarrow_{\kappa}  {\true}; C\cup\{e\sigma\}; \sigma;\delta$. 
			The remainder of the derivation has to consist only of one application of the rule (\rulefont{SU}), such that $\eta = \kappa$, $\mathfrak{C}_n = \mathfrak{C}_2 = {\true};\emptyset; \sigma\rho; \delta$, and $\rho=\mgu(C\cup\{e\sigma\})$. 
			Then, in particular, $\kappa\Vdash e\sigma\rho$.
	\end{itemize}
    
\end{proof}

\begin{corollary}\label{cor:soundness}
	If $t =^? s; \emptyset; \Id; \kappa \ \Longrightarrow_\varepsilon^* \ {\true}; \emptyset; \sigma, \varepsilon$ is a derivation using rules from {\narrow}, then $\sigma$ is an $(R,\varepsilon)$-unifier of $t$ and $s$. 
\end{corollary}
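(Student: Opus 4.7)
The plan is to derive Corollary \ref{cor:soundness} as an essentially immediate specialization of the soundness theorem proved just above, with one small preliminary observation needed to match hypotheses exactly.

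First, I would note that the derivation in the hypothesis, namely $t =^? s; \emptyset; \Id; \kappa \ \Longrightarrow_\varepsilon^* \ {\true}; \emptyset; \sigma; \varepsilon$, must be nonempty. Indeed, the starting term $t =^? s$ is a compound term having the binary symbol $=^?$ at its root, hence syntactically distinct from the constant $\true$ that appears in the terminal configuration. The first component of the configuration can only change via rule \rulefont{LP} (replacing a subterm) or rule \rulefont{Con} (overwriting by $\true$), so at least one of these rules must be fired. Consequently, I can rewrite the derivation as $t =^? s; \emptyset; \Id; \kappa \ \Longrightarrow_\varepsilon^+ \ {\true}; \emptyset; \sigma; \varepsilon$, putting it into the precise form required by the soundness theorem.

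Next, I would apply the soundness theorem directly to this derivation, with the identification $C = \emptyset$, $\sigma_{\text{thm}} = \Id$, $\delta = \kappa$, $\sigma' = \sigma$, and $\delta' = \varepsilon$ (using $\sigma_{\text{thm}}$ to avoid conflict with the corollary's $\sigma$). The theorem's conclusion yields precisely $\varepsilon \Vdash t\sigma =_R s\sigma$. By the definition of $(R,\varepsilon)$-unifier given in the quantitative unification problem box, this is exactly the assertion that $\sigma$ is an $(R,\varepsilon)$-unifier of $t$ and $s$, completing the proof.

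I do not anticipate any substantial obstacle: the soundness theorem has already done all the real work, and the corollary is merely unpacking the definitions for the concrete initial configuration of a unification problem. The only small care needed is the observation that the derivation cannot be of length zero, so that the theorem (stated for $\Longrightarrow^+$) applies.
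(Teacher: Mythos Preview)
Your proposal is correct and matches the paper's approach: the paper gives no explicit proof for this corollary, treating it as an immediate specialization of the soundness theorem, which is exactly what you do. Your observation that the derivation must have positive length (so that the theorem's hypothesis $\Longrightarrow^+$ is met) is a careful detail the paper leaves implicit.
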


\section{Completeness of {\narrow}}
\label{sect:completeness}

We will now investigate in what sense and under which conditions the completeness of ${\narrow}$ can be granted. 
A summary of the results is provided in \Cref{fig:completeness:summary} at the end of this section. 


A weak form of completeness could be a statement of the following form:
``If $\varepsilon \Vdash t \tau =_R s \tau$, then {\narrow} admits a derivation $t =^? s; \emptyset; \Id; \kappa \Longrightarrow_{\delta}^* {\true}; \emptyset; \sigma; \delta$, where $\delta \succsim \varepsilon$.''
Note that this statement does not imply any relation between the given solution $\tau$ and the computed substitution $\sigma$; we merely know by \Cref{cor:soundness} that $\sigma$ is also a solution of the problem. It can be also rephrased as ``if two terms are unifiable modulo $R$ with degree $\varepsilon$, then {\narrow} can find their $R$-unifier with some degree better than $\varepsilon$''.

However, as the following example demonstrates, even this weak form of completeness cannot hold in general. 
\begin{example} \label{ex:counterexample:cubic}
	Consider $\mathbb{\Omega} = \mathbb{L}$, and let $R$ be given by $\{1\Vdash a \mapsto c, 1 \Vdash b \mapsto c, 1 \Vdash c \mapsto d\}$ (where $a,b,c,d$ are constants). 
	Let $f$ be a ternary function symbol of modal arity $(\mathbb{1},\mathbb{1},\mathbb{1})$ (where $\mathbb{1}\colon \lbrack 0, \infty \rbrack \longrightarrow \lbrack 0, \infty \rbrack$ is the identity map).
    

	The $R$-unification problem $f(x,x,x) =^?_R f(a,b,d)$ has the ``optimal'' solution $\sigma = \{x \mapsto c\}$ with degree $3$.

	Via narrowing, only the solution $\tau = \{x \mapsto d\}$ with degree $4$ can be found (see below), but not $\sigma$ or the solutions $\{x \mapsto a\}$ and $\{x \mapsto b\}$, which also have degree $4$. 
\[\begin{array}{l}
f(x,x,x) =^?_R f(a,b,d);\emptyset\;\Id;0 \Longrightarrow_{\rulefont{LP}}^2 \\[1mm]
f(x,x,x) =^?_R f(c,c,d);\{a=a,b=b\};\Id; 2 \Longrightarrow_{\rulefont{LP}}^2 \\[1mm]
f(x,x,x) =^?_R f(d,d,d);\{a=a,b=b,c=c\}; \Id; 4 \Longrightarrow_{\rulefont{Con}}  \\[1mm]
{\true}; \{f(x,x,x)= f(d,d,d), a=a, b=b, c=c\}; \Id;   4 \Longrightarrow_{\rulefont{SU}}\\[1mm]
{\true}; \emptyset; \{x \mapsto d\}; 4
\end{array}
\]

Also notice that for $R = \{1\Vdash a \mapsto b, 1 \Vdash b \mapsto c\}$, and the problem  $f(x,x,x) =^?_R f(a,b,c)$, \narrow\ will compute only the solution $\{x\mapsto c\}$ with degree $3$ , but $\{x\mapsto b\}$ is a better solution with degree $2$. 
\end{example}
Observing that the incompleteness in this example is related to the three occurrences of the variable $x$ in the terms to be unified, one might consider restricting inputs to contain at most two occurrences of the same variable as a potential solution. However, this is insufficient, as rewrite rules can increase the number of variable occurrences in unification problems: 
\begin{example}
    Consider again $\qOmega=\mathbb{L}$ and the rewrite relation $R' \coloneqq R \cup \{g(x) \mapsto f(x,x,x)\}$, where $R$ is the rewrite relation from \Cref{ex:counterexample:cubic} and $g$ is a symbol of arity $(\mathbb{1})$. 
    Then the linear problem $g(x) =_{R'}^? f(a,b,d)$ exhibits the same behavior as the problem in \Cref{ex:counterexample:cubic}: narrowing computes the solution $\{x \mapsto d\}$ of degree $4$, whereas the solution $\{x \mapsto c\}$ of degree $3$ cannot be found. 
\end{example}
Requiring $\cR$ to be right-linear removes this possibility. 
Indeed, we obtain completeness of ordinary quantitative narrowing for right-linear TRSs and linear problems. 
\begin{theorem}\label{lem:weak:completeness}
	Suppose that $\cR$ is a right-linear $(\mathbb{\Omega},\mathbb{\Phi})$-TRS, $t$ is a linear term, and $t \tau \to_{R,\varepsilon}^* s$.  
	Then there is a narrowing derivation $t \leadsto^*_{R, \sigma, \delta} s'$ such that $\delta \succsim \varepsilon$ and $s' \leqq s$. 
\end{theorem}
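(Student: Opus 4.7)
The plan is induction on the length $n$ of the rewriting derivation $t\tau \to^*_{R,\varepsilon} s$, structured as a quantitative analogue of the classical narrowing lifting lemma \cite{DBLP:journals/aaecc/MiddeldorpH94}. The base case $n=0$ is immediate: take $s' = t$, $\sigma = \Id$, $\delta = \kappa$, and observe $t \leqq t\tau = s$ via $\tau$, while $\kappa \succsim \kappa = \varepsilon$.

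For the inductive step, split the derivation as $t\tau \to^*_{R,\varepsilon_1} u \to_{R,\varepsilon_2} s$ with $\varepsilon_1 \otimes \varepsilon_2 = \varepsilon$, and apply the induction hypothesis to obtain a narrowing derivation $t \leadsto^*_{R, \sigma_1, \delta_1} u'$ and a substitution $\rho$ with $u'\rho = u$ and $\delta_1 \succsim \varepsilon_1$. Throughout the induction, maintain the auxiliary invariant that $u'$ is linear: starting from the linear term $t$, each narrowing step inserts a linear right-hand side (by right-linearity of $\cR$) with fresh variables, while the subsequent mgu can only identify variables of the redex being replaced, and those disappear from the reduct. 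The final rewrite step applies a fresh variant $\zeta \Vdash l \mapsto r$ of a rule at a position $p \in \pos(u)$ via a substitution $\theta$, so that $u|_p = l\theta$, $s = u[r\theta]_p$, and $\varepsilon_2 = \partial_p(u)(\zeta)$. The proof then branches on the location of $p$ with respect to $u'$.

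\emph{Case 1: $p \in \nvpos(u')$.} Variables of $u'$ and of the fresh variant are disjoint, so $\rho \cup \theta$ is a well-defined substitution, and it unifies $u'|_p$ and $l$ because $u'|_p \cdot \rho = (u'\rho)|_p = u|_p = l\theta = l \cdot \theta$. Let $\sigma_2$ be a syntactic mgu with $\sigma_2 \rho_* = \rho \cup \theta$ for some $\rho_*$, and extend the narrowing derivation by $u' \leadsto_{R,\sigma_2,\partial_p(u')(\zeta)} (u'[r]_p)\sigma_2 =: u''$. Then $u''\rho_* = (u'[r]_p)(\rho \cup \theta) = u[r\theta]_p = s$, using linearity of $u'$ to keep $\rho$ and $\theta$ from interfering outside $p$ and at $p$, respectively. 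Since $p$ lies in the non-variable part of $u'$, the path from the root to $p$ traverses the same function symbols in $u'$ as in $u'\rho = u$, whence $\partial_p(u') = \partial_p(u)$ and the new step contributes exactly the degree $\varepsilon_2$. Monotonicity of $\otimes$ then gives $\delta_1 \otimes \varepsilon_2 \succsim \varepsilon_1 \otimes \varepsilon_2 = \varepsilon$.

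\emph{Case 2: $p \notin \nvpos(u')$.} Pick the variable position $q \sqsubseteq p$ of $u'$, write $p = q.p'$ and $x = u'|_q$; the redex $l\theta$ then lies entirely inside $\rho(x)$. Linearity of $u'$ ensures that $x$ occurs only at $q$, so defining $\rho'$ to agree with $\rho$ except for $\rho'(x) = \rho(x)[r\theta]_{p'}$ yields $u'\rho' = s$. No narrowing step is added, so $\sigma = \sigma_1$ and $\delta = \delta_1$, and the inequality $\delta_1 \succsim \varepsilon_1 \succsim \varepsilon_1 \otimes \varepsilon_2 = \varepsilon$ follows from integrality of $\qOmega$, which gives $\varepsilon_1 \otimes \varepsilon_2 \ord \varepsilon_1$.

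The main obstacle is discharging the linearity invariant and the mgu/freshness bookkeeping rigorously, especially when left-hand sides of rules are non-linear so that $\sigma_2$ may identify several variables of $u'|_p$; the delicate point is to show that such identifications never touch the part of $u''$ coming from the untouched skeleton. Beyond that, the quantitative content reduces to the identity $\partial_p(u') = \partial_p(u)$ for non-variable positions (Case 1) and to the integrality-induced slack $\varepsilon_1 \otimes \varepsilon_2 \ord \varepsilon_1$ (Case 2), both of which are clean applications of the Lawverean hypothesis on $\qOmega$.
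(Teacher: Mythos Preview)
Your argument is correct but organized dually to the paper's: you peel off the \emph{last} rewrite step and apply the induction hypothesis to the prefix $t\tau \to_{R}^* u$, whereas the paper peels off the \emph{first} step $t\tau \to_R t_1$ and applies the induction hypothesis to the suffix. Your direction forces you to strengthen the induction hypothesis with the invariant that the narrowed term $u'$ is linear (needed so that your Case~2 can modify $\rho$ at a single occurrence of $x$); the paper instead always restarts from a linear term and only has to check, in its Case~(3), that $t[r]_p\varphi$ is again linear before invoking the hypothesis. Both routes therefore hinge on the same fact---a narrowing step from a linear term under a fresh right-linear rule yields a linear term---which you rightly flag as the delicate point. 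Your two-case split is tidier than the paper's four (its Cases~(1)/(3) and (2)/(4) collapse pairwise), at the price of carrying the explicit invariant. One small correction: the equality $u''\rho_* = s$ in your Case~1 does not itself use linearity of $u'$, since $(u'[r]_p)(\rho\cup\theta) = (u'\rho)[r\theta]_p$ already follows from freshness of the rule variant; linearity is genuinely needed only in Case~2 and in closing the invariant.
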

\begin{proof}
	We proceed by induction on the number $n$ of rewrite steps employed in $t \tau \to_{R, \varepsilon}^* s$. 
	If $n=0$, then $s = t \tau$ and $\varepsilon = \kappa$; so the narrowing derivation $t \leadsto^0_{R, \Id, \kappa} t$ works. 
	
	For the induction step, assume that the assertion holds up until $n$, and we are given a rewrite sequence $t \tau \to_{R, \varepsilon_1} t_1 \to^{n}_{R, \varepsilon_2} s$, where $\varepsilon_1 \otimes \varepsilon_2 = \varepsilon$. 
	Let $p \in \pos(t \tau)$ be the position where the first rewrite step takes place.
	Then there exists a fresh variant $\zeta \Vdash l \mapsto r$ of a rule in $R$ and a substitution $\rho$ such that $(t \tau)|_p = l \rho$, $t \tau[r \rho]_p = t_1$, and $\varepsilon_1 = \partial_p(t \tau)(\zeta)$. 
    We may assume without loss of generality that $\dom(\rho) = \var(l)$. 
	We distinguish four different cases: 
	\begin{enumerate}[label=(\arabic*)]
		\item $p \in \pos(t)$ and $t|_p \tau = t|_p$. 
			In this case, we have $t_1 = (t \tau)[r \rho]_p = t[r \rho]_p \tau$ as $r \rho \tau = r \rho$ since $R$ is regular. 
			Thus, by induction hypothesis, there is a narrowing derivation $t[r \rho]_p \leadsto^*_{R,\delta,\sigma} s'$ for some $\delta \succsim \varepsilon_2$ and $s'\leqq s$.
			Moreover, we have $e \leadsto_{R, \varepsilon_1, \rho} t[r \rho]_p$, so we obtain $t \leadsto^*_{R, \varepsilon_1 \otimes \delta, \rho \sigma} s'$ (for some $s' \leqq s$), and $\varepsilon_1 \otimes \delta \succsim \varepsilon$ by monotonicity of $\otimes$. 
		\item $p \in \pos(t)$ and $t|_p = y \in \dom(\tau)$. 
			In this case, define a substitution $\tau'$ by $x \tau' \coloneqq x \tau$ whenever $x \neq y$ and $y \tau' = r \rho$. 
			Then we have $t_1 = t \tau'$ as $t$ is linear. 
			Thus, by induction hypothesis, there exists a narrowing derivation $t \leadsto^*_{R, \sigma, \delta} s'$ with $\delta \succsim \varepsilon_2 \succsim \varepsilon_1 \otimes \varepsilon_2 = \varepsilon$ and $s' \leqq s$. 
		\item $p \in \pos(t)$, but $t|_p \not\in \mathcal{V}$. 
			Then $t|_p \tau = (t \tau)_p = l \rho$. 
			Since the variables of $l$ are fresh, this means that $t|_p$ and $l$ are syntactically unifiable; let $\varphi$ denote their mgu.
			Then there is a narrowing step $t \leadsto_{R, \partial_p(t)(\zeta), \varphi} t [r]_p \varphi$; note that we have $\partial_p(t)(\zeta) = \partial_p(t \tau)(\zeta) = \varepsilon_1$.
			Since the substitution $\tau \rho$ is also a syntactic unifier of $t|_p$ and $l$, we can write $\tau \rho = \varphi \chi$ for some substitution $\chi$. 
			So we have $t_1 = t[r]_p \tau \rho = (t[r]_p \varphi) \chi$.
            Moreover, we claim that $t[r]_p\varphi$ is a linear term. 
            Indeed, $t[r]_p$ is linear as $t$ is linear and $r$ is fresh and linear. 
            Moreover, the range of $\varphi|_{\var(r)}=\varphi|_{\var(l)}$ is a set of linear terms that do not contain any variables occurring in $t[r]_p$; thus, $t[r]_p\varphi$ is linear.
   
			Thus, by induction hypothesis, there is a narrowing derivation $t[r]_p \varphi \leadsto^*_{R, \delta, \sigma} s'$ with $\delta \succsim \varepsilon_2$ and $s' \leqq s$, whence we obtain $t \leadsto^*_{R, \varepsilon_1 \otimes \delta, \varphi \sigma} s'$. 
		\item $p \not \in \pos(t)$. 
			In this case, there exists a position $q \in \pos(t)$ with $q \sqsubseteq p$ such that $t|_q = y \in \dom(\tau)$. 
			Write $q.p' = p$. 
			Again, we define a substitution $\tau'$ by $x \tau' \coloneqq x \tau$ for $x \neq y$ and $y \tau' \coloneqq y \tau[r \rho]_{p'}$. 
			Then we have $t_1 = t \tau'$ (as $y$ cannot appear in any other position of $t_1$ than $q$ by linearity).
			Thus, by induction hypothesis, there exists a narrowing derivation $t \leadsto^*_{R, \sigma, \delta} s'$ with $\delta \succsim \varepsilon_2 \succsim \varepsilon$ and $s'\leqq s$. 
	\end{enumerate}
    \todo{Cases (2) and (4) are treated in a very similar fashion, but at the moment I can't figure out how to write them as one case. 
    Is the common condition that $e|_p \tau \neq e|_p$ and $p \not\in\nvpos(e)$?}
\end{proof}

Basic quantitative narrowing, however, is still not complete even when $\cR$ is right-linear and the unification problem is linear or even ground, if $\cR$ is not balanced, as the following example shows:
\begin{example}
	Consider $\mathbb{\Omega} = \mathbb{L}$, the graded signature $\mathcal{F} = \{f:(\lambda x.3\cdot x), g: (\mathbb{1}), \allowbreak a:(), b:() \}$, and the rewrite relation $R = \{0 \Vdash f(x) \mapsto g(x), 1 \Vdash a \mapsto b\}$.

	The problem $f(a) =_R^? g(b)$ is solved by the identity solution with degree $1$, as we have $0 \Vdash f(a) =_R g(a)$ and $1 \Vdash g(a) =_R g(b)$. 

	However, the only basic narrowing derivation is  $f(a) \leadsto_{R,\Id,3} f(b) \leadsto_{R,\{x\mapsto b\},0} g(b)$, computing a degree of $3$. 
\end{example}
In addition to the previous assumptions, we therefore also need to assume that $\cR$ is balanced. 
We conjecture that these assumptions are sufficient to lift an ordinary quantitative narrowing derivation to a basic one. 
\begin{conjecture}
    Let $\cR$ be a confluent, balanced, and right-linear $(\qOmega,\qPhi)$-TRS. 
    If $t \leadsto^*_{R, \sigma, \varepsilon} s$, where $t$ is a linear term, then there exists a basic narrowing derivation $t \leadsto^*_{R', \sigma, \varepsilon} s$.
\end{conjecture}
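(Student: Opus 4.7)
The plan is to adapt the classical basic-lifting argument of Middeldorp and Hamoen~\cite{DBLP:journals/aaecc/MiddeldorpH94} to the quantitative setting, proceeding by induction on the length $n$ of the given narrowing derivation $t \leadsto^*_{R,\sigma,\varepsilon} s$. The base case $n=0$ is immediate. In the inductive step, if every step is already performed at a basic position, the claim follows directly. Otherwise, let the $k$-th step be the first non-basic one; by definition its position $p$ extends some $q.p''$ where $q$ is the position of an earlier, say $j$-th, step and $p''$ lies inside the instance, produced by the narrowing substitution of the $j$-th step, of a variable $x$ occurring in the right-hand side of the rewrite rule used at step $j$.

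The heart of the argument is a commutation lemma: the pair consisting of the $j$-th step (at $q$, with rule $\zeta \Vdash l \mapsto r$) and the $k$-th step (at $p$, with rule $\iota \Vdash l' \mapsto r'$) can be replaced by a derivation that first performs the inner narrowing inside $r$ at the position of $x$, and then performs the outer step at $q$. Right-linearity of $\cR$ ensures that $x$ occurs at most once in $r$, so the inner step is unique and no duplication of work arises. The balanced hypothesis yields $\grdeg_x(l)=\grdeg_x(r)$, which is exactly what is required for the product of quantale elements accumulated by the rearranged pair to equal the original product. Linearity of $t$ propagates along the derivation in the manner already exploited in \Cref{lem:weak:completeness}, preventing variable-capture obstructions to the commutation, while confluence is used via \Cref{lem:CR} to realign the rearranged derivation with the target term $s$. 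Iterating the commutation strictly decreases the lexicographic measure given by the index of the first non-basic step together with $n$, so after finitely many iterations one obtains a basic derivation with the same $\sigma$ and $\varepsilon$.

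The main obstacle is the quantitative bookkeeping inside the commutation lemma. One must verify that the accumulated degree $\grdeg_q(t_{j-1})(\zeta) \otimes \cdots \otimes \grdeg_p(t_{k-1})(\iota)$ of the original subderivation equals the accumulated degree of the commuted variant, in which the inner rewrite is performed first at the position of $x$ in $r$ and the outer rewrite is performed afterwards at $q$. This equality follows from the balanced identity $\grdeg_x(l)=\grdeg_x(r)$ combined with a careful decomposition of $\grdeg_p(t_{k-1})$ in terms of $\grdeg_q(t_{j-1})$, the grade of $x$ inside $l$, and the grade of $x$ inside $r$; pinning this down formally is where the hypotheses interact most delicately. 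A secondary subtlety is that the overall computed substitution must remain unchanged, modulo renaming of fresh rule variants, and right-linearity is again essential here since otherwise the commutation could require several copies of the inner step and thereby alter both the substitution and the degree.
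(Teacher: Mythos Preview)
This statement is presented in the paper as an open \emph{conjecture}, not a theorem; the paper gives no proof. The surrounding text explains precisely why the classical Middeldorp--Hamoen route fails in the quantitative setting: that route replaces an arbitrary rewrite derivation by an innermost one before lifting, and the paper's counterexample $R=\{0\Vdash f(a)\mapsto f(b),\;2\Vdash a\mapsto b\}$ over $\mathbb{L}$ shows that the innermost reduction $f(a)\to_{R,2} f(b)$ is strictly worse than the non-innermost one of degree $0$, so the innermost detour cannot preserve $\varepsilon$. There is therefore no paper proof to compare against; your proposal is an attempt to settle an open problem.

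As a strategy, a direct commutation argument is reasonable, but several steps are not pinned down and at least one is misstated. First, your description of the commuted pair is garbled: you say one ``first performs the inner narrowing inside $r$ at the position of $x$'', but $x$ sits at a \emph{variable} position of $r$, so no narrowing step can occur there. Presumably you mean to narrow in $t_{j-1}$ at the preimage of the offending subterm, i.e.\ at $q.q'$ where $q'$ is an occurrence of $x$ in $l$; but then you must argue that this preimage exists, is unique, and is itself basic---none of which follows from the stated hypotheses (the conjecture does not assume left-linearity of $\cR$). Second, you commute only the pair $(j,k)$ and say nothing about the intermediate steps $j{+}1,\dots,k{-}1$, which may act at positions overlapping $p$ or $q$. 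Third, your termination measure is stated backwards: a successful commutation should make the first non-basic step occur \emph{later}, so its index \emph{increases}; you need something like the number of non-basic steps, or $n$ minus that index. Finally, the appeal to confluence via \Cref{lem:CR} to ``realign the rearranged derivation with the target term $s$'' is unmotivated: if the commutation is correct then $s$, $\sigma$, and $\varepsilon$ are preserved exactly and no realignment is needed, while if they are not preserved exactly you are no longer proving the conjecture as stated (which demands the \emph{same} $\sigma$ and $\varepsilon$).
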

In the non-quantitative case, the step from narrowing to basic narrowing in a convergent TRS is done by assuming the rewrite sequence was innermost, as innermost rewrite sequences can always be lifted to basic narrowing sequences. 
In the quantitative setting, we cannot proceed in the same way as we cannot grant that an innermost rewrite sequence yields the optimal degree.
\begin{example}
    Consider $\qOmega = \mathbb{L}$, $R = \{0 \Vdash f(a) \mapsto f(b),\ 2 \Vdash a \mapsto b\}$. 
    Then innermost rewriting only gives $f(a)\to_{R,2} f(b)$, which is not the optimal degree. 
\end{example}
If we restrict ourselves to right-ground TRSs and linear problems, then any narrowing sequence is basic by \Cref{rem:rightground:basic}. 
Hence, we obtain completeness under these assumptions. 

\begin{theorem}\label{cor:basic:narrowing:to:calculus} 
    If $t,s$ are terms in $T(\cF,\cV)$ and there is a basic narrowing derivation $t =^? s \leadsto^*_{R',\varepsilon, \sigma} {\true}$, then {\narrow} (w.r.t. $R$) admits a derivation $t=^? s;\allowbreak \emptyset;\Id;\kappa \Longrightarrow^*_\varepsilon {\true};\emptyset;\sigma';\varepsilon$, where $\sigma|_{\var(t,s)} = \sigma'|_{\var(t,s)}$.
\end{theorem}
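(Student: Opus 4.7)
The plan is to decompose the basic narrowing derivation into an $R$-only prefix followed by the single application of the special rule, and then simulate each part in $\narrow$. Observe that the symbol $=^?$ occurs neither in $t$ nor in $s$, and is introduced neither by any right-hand side of a rule of $R$ (whose sides lie in $T(\mathcal{F},\mathcal{V})$) nor by $\true$ (the right-hand side of the rule $\kappa \Vdash x =^? x \mapsto \true$ of $R'$). Hence $=^?$ remains only at the root of every intermediate term of the derivation, which forces the special rule to be applied only at the root position $\lambda$; after that step the constant $\true$ admits no further narrowing. The given derivation therefore factors uniquely as
\[
t =^? s \ \leadsto^*_{R,\varepsilon,\sigma_1}\ u_1 =^? u_2 \ \leadsto_{R',\kappa,\sigma_2}\ \true,
\]
with $\sigma = \sigma_1 \sigma_2$, $\sigma_2 = \mgu((u_1 =^? u_2),(x =^? x))$ for a fresh variable $x$, and total degree $\varepsilon \otimes \kappa = \varepsilon$ by integrality of $\qOmega$.

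I would then apply \Cref{lem:narrowing:relation:to:calculus} to the prefix, which is an ordinary basic narrowing derivation over $R$, to obtain a $\narrow$-derivation
\[
t =^? s;\ \emptyset;\ \Id;\ \kappa \ \Longrightarrow^*_{\varepsilon}\ v_1 =^? v_2;\ \emptyset;\ \sigma_1;\ \varepsilon
\]
with $(v_1 =^? v_2)\sigma_1 = u_1 =^? u_2$. Since the special-rule step succeeded, $u_1$ and $u_2$ are syntactically unifiable. I then extend the derivation by one application of $\rulefont{Con}$ (legitimate because $v_1 =^? v_2 \neq \true$) followed by one of $\rulefont{SU}$:
\[
v_1 =^? v_2;\ \emptyset;\ \sigma_1;\ \varepsilon \ \Longrightarrow_{\kappa}\ \true;\ \{u_1 =^? u_2\};\ \sigma_1;\ \varepsilon \ \Longrightarrow_{\kappa}\ \true;\ \emptyset;\ \sigma_1\rho;\ \varepsilon,
\]
where $\rho = \mgu(u_1,u_2)$. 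Setting $\sigma' \coloneqq \sigma_1 \rho$ yields a $\narrow$-derivation of the required shape, with total degree $\varepsilon \otimes \kappa \otimes \kappa = \varepsilon$.

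The main obstacle is establishing $\sigma_1 \rho|_{\var(t,s)} = \sigma_1 \sigma_2|_{\var(t,s)}$. Restricted to $\var(u_1,u_2)$, both $\rho$ and $\sigma_2$ are most general unifiers of $u_1$ and $u_2$ (the extra binding of the fresh $x$ in $\sigma_2$ is harmless because $x$ does not occur in $\ran(\sigma_1)$), so they agree up to renaming of the fresh variables they introduce. Using the mgu conventions of \Cref{remark:idemp_substitutions} to align these fresh variables, a variable-by-variable argument shows that for each $y \in \var(t,s)$ and every $z \in \var(y\sigma_1)$, the substitutions $\rho$ and $\sigma_2$ act identically on $z$: variables lying in $\var(u_1,u_2)$ are mapped compatibly by the mgu property, while variables outside $\var(u_1,u_2)$ are left fixed by both. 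Consequently $y\sigma_1\rho = y\sigma_1\sigma_2$ for every $y \in \var(t,s)$, completing the proof. The delicate bookkeeping of fresh variables and the precise matching of the two mgus under the framework's substitution conventions is the main technical hurdle.
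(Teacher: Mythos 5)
Your proposal is correct and follows essentially the same route as the paper: factor the basic narrowing derivation so that the single application of $\kappa \Vdash x =^? x \mapsto \true$ is the last step, lift the $R$-prefix via \Cref{lem:narrowing:relation:to:calculus}, and close with one \rulefont{Con} and one \rulefont{SU} step. The only difference is cosmetic: the paper avoids your final mgu-reconciliation argument by taking the unifier in the \rulefont{SU} step to be the restriction to $\var(u_1,u_2)$ of the very same $\rho$ that unifies $u_1 =^? u_2$ with the fresh variant of $x =^? x$, so that $\sigma|_{\var(t,s)} = (\sigma_1\rho)|_{\var(t,s)}$ holds by construction rather than up to renaming.
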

\begin{proof}
    Since the symbols $=^?$ and {\true} do not appear in $R$, $t$, and $s$, the only narrowing step with respect to the rule $\kappa \Vdash x =^? x \mapsto {\true}$ must have taken place as the last step of the derivation. 
    Thus, we obtain a derivation 
    \[
        t =^? s \leadsto_{R, \varepsilon, \tau}  u =^? v
    \]
    such that $u =^? v$ is syntactically unifiable with a fresh variant of $x =^? x$, and their most general unifier $\rho$ satisfies $\tau\rho = \sigma$. 
    By \Cref{lem:narrowing:relation:to:calculus}, {\narrow} admits a derivation $t =^? s;\emptyset;\Id;\kappa \Longrightarrow^*_{\varepsilon} u' =^? v'; \emptyset; \tau; \varepsilon$, where $u = u'\tau$, $v = v'\tau$. 
    
    Note that the restriction of $\rho$ to $\var(u,v)$ is the most general unifier of $u$ and $v$. 
    Thus, we obtain the derivation 
    \[
        u' =^? v'; \emptyset; \tau \varepsilon \Longrightarrow_{\rulefont{Con},\kappa} {\true}; \{u'\tau =^? v'\tau\}; \sigma'; \varepsilon \Longrightarrow_{\rulefont{SU},\kappa} {\true};\emptyset;\tau\rho|_{\var(u,v)};\varepsilon
    ,\]
    and $\sigma|_{\var(t,s)} = (\tau\rho)|_{\var(t,s)} = (\tau\rho|_{\var(u,v)})|_{\var(t,s)}$. 
\end{proof}

\begin{corollary}\label{cor:weak:completeness:qnarrow:rewriting}
	Let $t =^? s$ be a linear problem, and let $\cR$ be a confluent, right-ground $(\qOmega,\qPhi)$-TRS. 
	If $t \tau \leftrightarrow_{R,\varepsilon}^* s \tau$, then {\narrow} admits a derivation $t =^? s; \allowbreak \emptyset; \Id; \kappa \Longrightarrow^* {\true}; \emptyset; \sigma; \delta$ such that $\delta \succsim \varepsilon$. 
\end{corollary}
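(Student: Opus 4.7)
The plan is to chain together five existing pieces of the paper, with no new machinery required. First, I would apply the Quantitative Church-Rosser lemma (Lemma~\ref{lem:CR}) to the hypothesis $t\tau \leftrightarrow_{R,\varepsilon}^* s\tau$ to obtain some $\zeta \succsim \varepsilon$ such that $t\tau \downarrow_{R,\zeta} s\tau$. By the remark following Definition~\ref{def:convertibility:joinability}, joinability of the instances translates to reachability of $\true$ in the extended system $\cR'$, giving $(t =^? s)\tau \to_{R',\zeta}^* \true$.

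Next, I would invoke Theorem~\ref{lem:weak:completeness} (ordinary narrowing completeness) applied to the term $t =^? s$, which is linear since $t =^? s$ is a linear problem, and to the TRS $\cR'$, which is right-ground (hence right-linear) because all right-hand sides of $R$ are ground and the added rule has the constant $\true$ on the right. This produces a narrowing derivation $t =^? s \leadsto_{R',\sigma,\eta}^* u$ with $\eta \succsim \zeta$ and $u \leqq \true$. Because every right-hand side in $R'$ is a ground, non-variable term, a simple induction on the narrowing steps shows that the root symbol of the narrowed term stays non-variable, so $u = \true$. Remark~\ref{rem:rightground:basic} then guarantees that this derivation is automatically basic, since $t =^? s$ is linear and $\cR'$ is right-ground. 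A final application of Theorem~\ref{cor:basic:narrowing:to:calculus} lifts the basic narrowing derivation to a {\narrow} derivation $t =^? s; \emptyset; \Id; \kappa \Longrightarrow_\eta^* \true; \emptyset; \sigma'; \eta$, and transitivity of $\succsim$ yields $\eta \succsim \varepsilon$ as required.

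The main subtleties, rather than obstacles, lie in the bookkeeping of the hypotheses that unlock each step: verifying that right-groundness is preserved when passing from $\cR$ to $\cR'$, reading \emph{linear problem} as linearity of the term $t =^? s$ (so that $t$ and $s$ share no variables and are individually linear), and confirming that the approximant $u$ produced by Theorem~\ref{lem:weak:completeness} must coincide with $\true$ rather than with a variable. Once these small points are settled, the proof is essentially a one-line composition of the intermediate lemmas, and no delicate calculations with the quantale are needed beyond invoking monotonicity of $\otimes$ through the cited results.
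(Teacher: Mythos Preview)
Your proposal is correct and follows essentially the same route as the paper: Church--Rosser (\Cref{lem:CR}) to get joinability, the remark after \Cref{def:convertibility:joinability} to pass to $R'$, \Cref{lem:weak:completeness} for the narrowing derivation, \Cref{rem:rightground:basic} to see it is basic, and \Cref{cor:basic:narrowing:to:calculus} to lift it to {\narrow}. You are in fact slightly more careful than the paper in arguing that the term $u$ with $u \leqq \true$ must equal $\true$ (the paper silently identifies them), though your inductive argument can be shortened to the observation that $\true$ is a constant and narrowing from $t =^? s$ in a right-ground system never produces a variable at the root.
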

\begin{proof}
	By \Cref{lem:CR}, we have $t \tau \downarrow_{R,\zeta} s \tau$ for some $\zeta\succsim\varepsilon$; that is, we have $t \tau =^? s \tau \allowbreak \to^*_{R',\zeta} {\true}$. 
	Thus, we obtain $t =^? s \leadsto^*_{R', \delta} {\true}$, where $\delta \succsim \zeta \succsim \varepsilon$, by \Cref{lem:weak:completeness}. 
	Since $t =^? s$ is linear and $R$ is right-ground, this narrowing derivation is basic by \Cref{rem:rightground:basic}. 
	So by \Cref{cor:basic:narrowing:to:calculus}, {\narrow} admits a derivation $t=^? s; \allowbreak \emptyset; \Id; \kappa \Longrightarrow^* {\true}; \emptyset; \sigma; \delta$. 
\end{proof}

\begin{corollary}\label{cor:weak:completeness:qnarrow:theory}
    Suppose that $\qOmega$ is a Lawverean quantale whose order $\precsim$ is total.
	Let $t =^? s$ be a linear problem, and let $\cR$ be a confluent, right-ground $(\qOmega,\qPhi)$-TRS. 
	If $\varepsilon \Vdash t\tau =_R s \tau$, then {\narrow} admits a derivation $t =^? s; \emptyset; \Id; \kappa \Longrightarrow^* {\true}; \emptyset; \sigma; \delta$ such that $\delta \succsim \varepsilon$. 
\end{corollary}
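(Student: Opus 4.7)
The plan is to reduce this corollary to the previously established \Cref{cor:weak:completeness:qnarrow:rewriting} by using confluence together with totality of the order in the quantale to pass from the multi-join statement provided by \Cref{lem:rewriting:completeness} to a single convertibility assertion of sufficient degree.

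First, I would start with the assumption $\varepsilon \Vdash t\tau =_R s\tau$ and invoke \Cref{lem:rewriting:completeness} (or equivalently its corollary obtained via the quantitative Church--Rosser property in \Cref{lem:CR}) to obtain a finite collection $\delta_1, \dots, \delta_n \in \Omega$ such that $\delta_1 \vee \cdots \vee \delta_n \succsim \varepsilon$ and $t\tau \leftrightarrow^*_{R,\delta_i} s\tau$ for every $i \in \{1,\dots,n\}$. This is the point where the equational derivation is turned into a family of concrete conversion sequences in the rewrite system.

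Next, I would use the totality assumption on $\precsim$ to concentrate this family in a single degree. Since $\{\delta_1,\dots,\delta_n\}$ is a finite subset of a totally ordered set, there exists an index $i^\star$ with $\delta_{i^\star} = \delta_1 \vee \cdots \vee \delta_n$, whence $\delta_{i^\star} \succsim \varepsilon$. For that index we have $t\tau \leftrightarrow^*_{R,\delta_{i^\star}} s\tau$, a single conversion of degree at least $\varepsilon$.

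Finally, since $t =^? s$ is linear and $\cR$ is confluent and right-ground, I would invoke \Cref{cor:weak:completeness:qnarrow:rewriting} applied to the conversion $t\tau \leftrightarrow^*_{R,\delta_{i^\star}} s\tau$. This yields a {\narrow} derivation $t =^? s; \emptyset; \Id; \kappa \Longrightarrow^* {\true}; \emptyset; \sigma; \delta$ with $\delta \succsim \delta_{i^\star} \succsim \varepsilon$, completing the proof.

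The only nontrivial step is the second one: without totality of $\precsim$, the join $\delta_1 \vee \cdots \vee \delta_n$ need not be realized by any single $\delta_i$, and one would at best obtain a \emph{set} of narrowing derivations whose computed degrees jointly dominate $\varepsilon$ (as is the case e.g. for the fuzzy quantale $\mathbb{I}$ under general left-continuous T-norms where the order is total, versus non-totally-ordered quantales). The totality hypothesis is exactly what allows us to collapse this disjunctive statement into a single derivation, which is the format required by the statement.
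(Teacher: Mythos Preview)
Your proposal is correct and follows essentially the same approach as the paper: invoke \Cref{lem:rewriting:completeness} to obtain the family $\delta_1,\dots,\delta_n$, use totality of $\precsim$ to extract a single $\delta_i \succsim \varepsilon$, and then apply \Cref{cor:weak:completeness:qnarrow:rewriting}. Your explanation of why totality is needed (to realize the join by one of the $\delta_i$) is a bit more explicit than the paper's, but the argument is the same.
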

\begin{proof}
    By \Cref{lem:rewriting:completeness}, there exist $\delta_1,\dots,\delta_n \in \Omega$ such that $\delta_1 \vee\dots\vee\delta_n \succsim \varepsilon$ and $t \tau \leftrightarrow_{R,\delta_i} s \tau$ for $i=1,\dots,n$. 
    Since the order $\precsim$ is total, we must have $\delta_i \succsim \varepsilon$ for some $i$; thus, \Cref{cor:weak:completeness:qnarrow:rewriting} yields the result. 
\end{proof}

The results and conjectures presented in this section are summarized in \Cref{fig:completeness:summary}.
\begin{figure}
    \centering
    \includegraphics[scale=1.2]{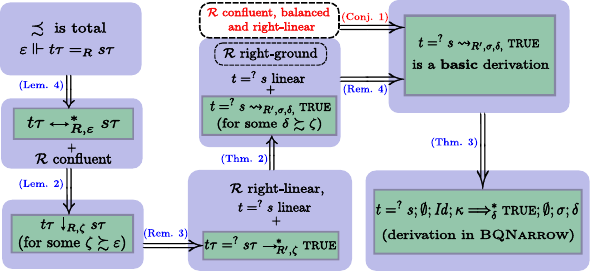}
    \caption{Summary of results and conjectures related to completeness}
    \label{fig:completeness:summary}
\end{figure}

\section{Conclusion}
\label{sect:conclusion}
Narrowing is a well-known technique that generalizes term rewriting by allowing unification rather than mere matching during rule applications. It has a wide range of applications, including equation solving in equational theories, the integration of functional and logic programming, symbolic execution, model checking, reachability analysis, etc.

In this paper, we introduced the concept of narrowing and its basic variant for graded theories, where the quantitative information is specified by (Lawverean) quantales. We applied this framework to quantitative equation solving and established the soundness of the resulting procedure. The quantitative setting presents additional challenges to completeness. To address them, we investigated a restricted class of unification problems and showed that for ordinary quantitative narrowing, completeness holds for linear unification problems in theories induced by confluent and right-linear rewrite systems (with an additional assumption that the underlying quantale is totally ordered). If in these conditions right-linearity is replaced with a stronger right-groundness restriction, we can prove also the completeness of basic narrowing. For future work, we conjectured that completeness of basic narrowing holds for right-linear systems as well, under an additional requirement that the rewrite rules are balanced.

\subsubsection*{Acknowledgments.} This work was partially supported by  the Austrian Science Fund (FWF) project P 35530; the Goiás Research Foundation FAPEG, process  202310267000223; the Brazilian Research Council CNPq, Grant Universal 409003/21-2, and Research Grant 313290/21-0.

\bibliographystyle{splncs04}  
\bibliography{sample-base.bib}

\begin{thebibliography}{10}
\providecommand{\url}[1]{\texttt{#1}}
\providecommand{\urlprefix}{URL }
\providecommand{\doi}[1]{https://doi.org/#1}

\bibitem{DBLP:conf/fscd/AhrensKGK25}
Ahrens, E., Kassing, J., Giesl, J., Katoen, J.: Weighted rewriting: Semiring
  semantics for abstract reduction systems. In: Fern{\'{a}}ndez, M. (ed.) 10th
  International Conference on Formal Structures for Computation and Deduction,
  {FSCD} 2025, July 14-20, 2025, Birmingham, {UK}. LIPIcs, vol.~337, pp.
  6:1--6:21. Schloss Dagstuhl - Leibniz-Zentrum f{\"{u}}r Informatik (2025).
  \doi{10.4230/LIPICS.FSCD.2025.6}

\bibitem{DBLP:journals/fss/Ait-KaciP20}
A{\"{\i}}t{-}Kaci, H., Pasi, G.: Fuzzy lattice operations on first-order terms
  over signatures with similar constructors: {A} constraint-based approach.
  Fuzzy Sets Syst.  \textbf{391},  1--46 (2020).
  \doi{10.1016/J.FSS.2019.03.019}

\bibitem{DBLP:journals/tcs/AlpuenteEI09}
Alpuente, M., Escobar, S., Iborra, J.: Termination of narrowing revisited.
  Theor. Comput. Sci.  \textbf{410}(46),  4608--4625 (2009).
  \doi{10.1016/J.TCS.2009.07.037}

\bibitem{DBLP:journals/igpl/AlpuenteEI11}
Alpuente, M., Escobar, S., Iborra, J.: Modular termination of basic narrowing
  and equational unification. Log. J. {IGPL}  \textbf{19}(6),  731--762 (2011).
  \doi{10.1093/JIGPAL/JZQ009}

\bibitem{DBLP:journals/cacm/AntoyH10}
Antoy, S., Hanus, M.: Functional logic programming. Commun. {ACM}
  \textbf{53}(4),  74--85 (2010). \doi{10.1145/1721654.1721675}

\bibitem{AvanziniYFroCoS25}
Avanzini, M., Yamada, A.: Weighted rewriting (2025),
  \url{https://akihisayamada.github.io/AY2025.pdf}, to appear in FroCoS 2025.

\bibitem{DBLP:journals/mscs/Ayala-RinconSFS21}
Ayala{-}Rinc{\'{o}}n, M., de~Carvalho~Segundo, W., Fern{\'{a}}ndez, M., Silva,
  G.F., Nantes{-}Sobrinho, D.: {Formalising nominal C-unification generalised
  with protected variables}. Math. Struct. Comput. Sci.  \textbf{31}(3),
  286--311 (2021). \doi{10.1017/S0960129521000050}

\bibitem{DBLP:conf/mkm/AyalaRinconFSKN23}
Ayala{-}Rinc{\'{o}}n, M., Fern{\'{a}}ndez, M., Silva, G.F., Kutsia, T.,
  Nantes{-}Sobrinho, D.: {Nominal AC-Matching}. In: Dubois, C., Kerber, M.
  (eds.) Intelligent Computer Mathematics - 16th International Conference,
  {CICM} 2023, Cambridge, UK, September 5-8, 2023, Proceedings. Lecture Notes
  in Computer Science, vol. 14101, pp. 53--68. Springer (2023).
  \doi{10.1007/978-3-031-42753-4\_4}

\bibitem{DBLP:journals/jar/AyalaRinconFSKN24}
Ayala{-}Rinc{\'{o}}n, M., Fern{\'{a}}ndez, M., Silva, G.F., Kutsia, T.,
  Nantes{-}Sobrinho, D.: {Certified First-Order AC-Unification and
  Applications}. J. Autom. Reason.  \textbf{68}(4), ~25 (2024).
  \doi{10.1007/S10817-024-09714-5}

\bibitem{DBLP:books/daglib/0092409}
Baader, F., Nipkow, T.: Term rewriting and all that. Cambridge University Press
  (1998). \doi{https://doi.org/10.1017/CBO9781139172752}

\bibitem{Baader.2001}
Baader, F., Snyder, W.: {Chapter 8 - Unification Theory}. In: Robinson, A.,
  Voronkov, A. (eds.) {Handbook of Automated Reasoning}, p. 445–533.
  North-Holland, Amsterdam (2001). \doi{10.1016/B978-044450813-3/50010-2}

\bibitem{Bacci.2020}
Bacci, G., Mardare, R., Panangaden, P., Plotkin, G.D.: {Quantitative Equational
  Reasoning}. In: Barthe, G., Katoen, J.P., Silva, A. (eds.) {Foundations of
  Probabilistic Programming}, p. 333–360. Cambridge University Press (2020).
  \doi{10.1017/9781108770750.011}

\bibitem{DBLP:conf/rta/BaeEM13}
Bae, K., Escobar, S., Meseguer, J.: Abstract logical model checking of
  infinite-state systems using narrowing. In: van Raamsdonk, F. (ed.) 24th
  International Conference on Rewriting Techniques and Applications, {RTA}
  2013, June 24-26, 2013, Eindhoven, The Netherlands. LIPIcs, vol.~21, pp.
  81--96. Schloss Dagstuhl - Leibniz-Zentrum f{\"{u}}r Informatik (2013).
  \doi{10.4230/LIPICS.RTA.2013.81}

\bibitem{DBLP:journals/scp/CholewaEM15}
Cholewa, A., Escobar, S., Meseguer, J.: Constrained narrowing for conditional
  equational theories modulo axioms. Sci. Comput. Program.  \textbf{112},
  24--57 (2015). \doi{10.1016/J.SCICO.2015.06.001}

\bibitem{DBLP:conf/birthday/ClavelDEELMT15}
Clavel, M., Dur{\'{a}}n, F., Eker, S., Escobar, S., Lincoln, P.,
  Mart{\'{\i}}{-}Oliet, N., Talcott, C.L.: Two decades of {Maude}. In:
  Mart{\'{\i}}{-}Oliet, N., {\"{O}}lveczky, P.C., Talcott, C.L. (eds.) Logic,
  Rewriting, and Concurrency - Essays dedicated to Jos{\'{e}} Meseguer on the
  Occasion of His 65th Birthday. Lecture Notes in Computer Science, vol.~9200,
  pp. 232--254. Springer (2015). \doi{10.1007/978-3-319-23165-5\_11}

\bibitem{DBLP:conf/aaai/DershowitzS88}
Dershowitz, N., Sivakumar, G.: Goal-directed equation solving. In: Shrobe,
  H.E., Mitchell, T.M., Smith, R.G. (eds.) Proceedings of the 7th National
  Conference on Artificial Intelligence, St. Paul, MN, USA, August 21-26, 1988.
  pp. 166--170. {AAAI} Press / The {MIT} Press (1988)

\bibitem{DBLP:conf/fscd/DunduaKMP20}
Dundua, B., Kutsia, T., Marin, M., Pau, C.: Constraint solving over multiple
  similarity relations. In: Ariola, Z.M. (ed.) 5th International Conference on
  Formal Structures for Computation and Deduction, {FSCD} 2020, June 29-July 6,
  2020, Paris, France (Virtual Conference). LIPIcs, vol.~167, pp. 30:1--30:19.
  Schloss Dagstuhl - Leibniz-Zentrum f{\"{u}}r Informatik (2020).
  \doi{10.4230/LIPICS.FSCD.2020.30}

\bibitem{DBLP:conf/ijcar/EhlingK24}
Ehling, G., Kutsia, T.: Solving quantitative equations. In: Benzm{\"{u}}ller,
  C., Heule, M.J.H., Schmidt, R.A. (eds.) Automated Reasoning - 12th
  International Joint Conference, {IJCAR} 2024, Nancy, France, July 3-6, 2024,
  Proceedings, Part {II}. Lecture Notes in Computer Science, vol. 14740, pp.
  381--400. Springer (2024). \doi{10.1007/978-3-031-63501-4\_20}

\bibitem{DBLP:journals/entcs/EscobarMT07}
Escobar, S., Meseguer, J., Thati, P.: Narrowing and {Rewriting Logic}: from
  foundations to applications. In: L{\'{o}}pez{-}Fraguas, F.J. (ed.)
  Proceedings of the 15th Workshop on Functional and (Constraint) Logic
  Programming, {WFLP} 2006, Madrid, Spain, November 16-17, 2006. Electronic
  Notes in Theoretical Computer Science, vol.~177, pp. 5--33. Elsevier (2006).
  \doi{10.1016/J.ENTCS.2007.01.004}

\bibitem{DBLP:journals/jlp/EscobarSM12}
Escobar, S., Sasse, R., Meseguer, J.: Folding variant narrowing and optimal
  variant termination. J. Log. Algebraic Methods Program.  \textbf{81}(7-8),
  898--928 (2012). \doi{10.1016/J.JLAP.2012.01.002}

\bibitem{Fay1979}
Fay, M.: First-order unification in an equational theory. In: 4th Workshop on
  Automated Deduction, Austin, TX. pp. 161--167 (1979)

\bibitem{DBLP:journals/fuin/FormatoGS00}
Formato, F., Gerla, G., Sessa, M.I.: Similarity-based unification. Fundam.
  Informaticae  \textbf{41}(4),  393--414 (2000). \doi{10.3233/FI-2000-41402}

\bibitem{Gavazzo.2022}
Gavazzo, F., {Di Florio}, C.: {Quantitative and Metric Rewriting: Abstract,
  Non-Expansive, and Graded Systems}. CoRR  \textbf{abs/2206.13610} (2022).
  \doi{10.48550/ARXIV.2206.13610}

\bibitem{DBLP:journals/pacmpl/GavazzoF23}
Gavazzo, F., {Di Florio}, C.: Elements of quantitative rewriting. Proc. {ACM}
  Program. Lang.  \textbf{7}({POPL}),  1832--1863 (2023). \doi{10.1145/3571256}

\bibitem{Hanus16Curry}
Hanus, M.: Curry: An integrated functional logic language (vers.\ 0.9.0).
  Available at \url{http://www.curry-lang.org} (2016)

\bibitem{Hullot1980}
Hullot, J.: {Canonical Forms and Unification}. In: Bibel, W., Kowalski, R.A.
  (eds.) {5th Conference on Automated Deduction, Les Arcs, France, July 8-11,
  1980, Proceedings}. {Lecture Notes in Computer Science}, vol.~87, p.
  318–334. Springer (1980). \doi{10.1007/3-540-10009-1\_25}

\bibitem{Julian-Iranzo.2015}
Julián-Iranzo, P., Rubio-Manzano, C.: {Proximity-based unification theory}.
  Fuzzy Sets and Systems  \textbf{262},  21--43 (2015).
  \doi{10.1016/j.fss.2014.07.006}

\bibitem{Lawvere1973}
Lawvere, F.W.: {Metric spaces, generalized logic, and closed categories}.
  Rendiconti del Seminario Matematico e Fisico di Milano  \textbf{43},
  135–166 (1973). \doi{10.1007/BF02924844}

\bibitem{DBLP:journals/jlap/LopezRuedaES23}
L{\'{o}}pez{-}Rueda, R., Escobar, S., Sapi{\~{n}}a, J.: An efficient canonical
  narrowing implementation with irreducibility and {SMT} constraints for
  generic symbolic protocol analysis. J. Log. Algebraic Methods Program.
  \textbf{135},  100895 (2023). \doi{10.1016/J.JLAMP.2023.100895}

\bibitem{Mardare.2016}
Mardare, R., Panangaden, P., Plotkin, G.D.: Quantitative algebraic reasoning.
  In: {Proceedings of the 31st Annual ACM/IEEE Symposium on Logic in Computer
  Science}. p. 700–709. {LICS '16}, Association for Computing Machinery, New
  York, NY, USA (2016). \doi{10.1145/2933575.2934518}

\bibitem{DBLP:conf/lics/MardarePP17}
Mardare, R., Panangaden, P., Plotkin, G.D.: On the axiomatizability of
  quantitative algebras. In: 32nd Annual {ACM/IEEE} Symposium on Logic in
  Computer Science, {LICS} 2017, Reykjavik, Iceland, June 20-23, 2017. pp.
  1--12. {IEEE} Computer Society (2017). \doi{10.1109/LICS.2017.8005102}

\bibitem{DBLP:conf/alp/MiddeldorpH92}
Middeldorp, A., Hamoen, E.: Counterexamples to completeness results for basic
  narrowing (extended abstract). In: Kirchner, H., Levi, G. (eds.) Algebraic
  and Logic Programming, Third International Conference, Volterra, Italy,
  September 2-4, 1992, Proceedings. Lecture Notes in Computer Science,
  vol.~632, pp. 244--258. Springer (1992). \doi{10.1007/BFB0013830}

\bibitem{DBLP:journals/aaecc/MiddeldorpH94}
Middeldorp, A., Hamoen, E.: Completeness results for basic narrowing. Appl.
  Algebra Eng. Commun. Comput.  \textbf{5},  213--253 (1994).
  \doi{10.1007/BF01190830}

\bibitem{Middeldorp.1994}
Middeldorp, A., Hamoen, E.: {Completeness Results for Basic Narrowing}. Appl.
  Algebra Eng. Commun. Comput.  \textbf{5},  213–253 (1994).
  \doi{10.1007/BF01190830}

\bibitem{Pau.2022}
Pau, C.: {Symbolic Techniques for Approximate Reasoning}. Ph.D. thesis, RISC,
  Johannes Kepler University Linz (2022),
  \url{https://epub.jku.at/obvulihs/content/titleinfo/7818355}

\bibitem{Pau.2021}
Pau, C., Kutsia, T.: {Proximity-Based Unification and Matching for Fully Fuzzy
  Signatures}. In: {30th {IEEE} International Conference on Fuzzy Systems,
  {FUZZ-IEEE} 2021, Luxembourg, July 11-14, 2021}. p. 1–6. {IEEE} (2021).
  \doi{10.1109/FUZZ45933.2021.9494438}

\bibitem{DBLP:journals/tcs/Sessa02}
Sessa, M.I.: Approximate reasoning by similarity-based {SLD} resolution. Theor.
  Comput. Sci.  \textbf{275}(1-2),  389--426 (2002).
  \doi{10.1016/S0304-3975(01)00188-8}

\bibitem{DBLP:books/daglib/0008995}
Terese: Term rewriting systems, Cambridge tracts in theoretical computer
  science, vol.~55. Cambridge University Press (2003)

\bibitem{DBLP:journals/peerj-cs/TranDEO23}
Tran, D.D., Do, C.M., Escobar, S., Ogata, K.: Hybrid post-quantum transport
  layer security formal analysis in {Maude-NPA} and its parallel version. PeerJ
  Comput. Sci.  \textbf{9},  e1556 (2023). \doi{10.7717/PEERJ-CS.1556}

\end{thebibliography}

\end{document}